\documentclass[journal,twocolumn]{IEEEtran}
\usepackage{amssymb}
\usepackage{newtxtext}
\usepackage{amsmath,amsfonts}
\usepackage{algpseudocode}
\usepackage{dsfont}
\usepackage{algorithm}
\usepackage{array}
\usepackage{textcomp}
\usepackage{stfloats}
\usepackage{amsthm}
\usepackage{url}
\usepackage{graphicx}
\usepackage{subcaption}
\usepackage{epstopdf}
\graphicspath{{figures/}}
\usepackage{verbatim}
\usepackage{cite}
\usepackage[colorlinks,linkcolor=blue,citecolor=blue]{hyperref}
\usepackage{balance}
\newtheorem{lemma}{Lemma}
\usepackage{booktabs}
\usepackage[printonlyused]{acronym}
\usepackage{gensymb}
\usepackage{cases}

\renewcommand{\mathbf}{\boldsymbol}

\acrodef{FR}{flexible reflector}
\acrodef{6G}{the sixth generation wireless communications}
\acrodef{LoS}{line-of-sight}
\acrodef{NLoS}{non-line-of-sight}
\acrodef{PLS}{physical layer security}
\acrodef{FSS}{Fibonacci segmentation search}
\acrodef{FS}{Fibonacci search}
\acrodef{CES}{cross entropy search}
\acrodef{CEopt}{cross entropy optimization}
\acrodef{GMM}{Gaussian mixture model}
\acrodef{RIS}{reconfigurable intelligent surface}
\acrodef{MIMO}{multiple-input multiple-output}
\acrodef{XL-MIMO}{extremely large multiple-input and multiple-output}
\acrodef{Tx}{transmitter}
\acrodef{Rx}{receiver}
\acrodef{Eve}{eavesdropper}
\acrodef{AWGN}{additive white Gaussian noise}
\acrodef{CSI}{channel state information}
\acrodef{RCS}{radar cross section}
\acrodef{KL}{Kullback-Leibler}
\acrodef{PSO}{particle swarm optimization}
\acrodef{MRT}{maximum ratio transmission}
\acrodef{RA}{reconfigurable antenna}
\acrodef{CDF}{cumulative distribution function}
\acrodef{PDF}{probability density function}
\acrodef{DoF}{degree of freedom}
\acrodefplural{DoF}{degrees of freedom}
\acrodef{ISAC}{integrated sensing and communication}
\acrodef{AN}{artificial noise}
\acrodef{RF}{radio frequency}

\allowdisplaybreaks
\begin{document}
\title{Secure Wireless Communications with Flexible Reflectors: Joint Position and Rotation Optimization for Physical Layer Security}
\author{
		Miao~Jiang,
		Xike~Wang, 
        Yiqing~Li, 
        and Guangchi~Zhang
	\IEEEcompsocitemizethanks{M. Jiang, X. Wang and G. Zhang are with the School of Information Engineering, Guangdong University of Technology, Guangzhou 510006, China (e-mail: jmiao@gdut.edu.cn, wangxike@mails.gdut.edu.cn, gczhang@gdut.edu.cn).}
	\IEEEcompsocitemizethanks{Y. Li is with the School of Computer Science and Technology, Guangdong University of Technology, Guangzhou 510006, China (e-mail: liyiq5@gdut.edu.cn).}
}
\markboth{}%
{}

\maketitle
\begin{abstract}   
By exploiting the spatial selectivity of wireless channels, physical layer security (PLS) protects confidential wireless transmissions. Flexible reflectors (FRs), which are passive metallic plates mounted on movable platforms, offer a low-cost and low-complexity solution for reshaping the propagation environment. However, their potential for secrecy-oriented signal propagation shaping has not been explored. In this paper, we study an FR-aided secure communication over non-line-of-sight links and jointly optimize the positions and rotations of FRs to maximize the secrecy rate under geometric reflection and mutual-blocking constraints. For the single-FR case, we propose a Fibonacci segmentation search (FSS) method, which partitions the feasible region into multiple unimodal intervals and obtains the optimal solution with Fibonacci search over the unimodal intervals. To reduce the computational overhead, we also propose a low-complexity cross-entropy search (CES) method, which employs Gaussian mixture model-based adaptive importance sampling to localize promising search intervals before deterministic refinement. After deriving the blocked placement interval induced by the deployed FRs and integrating it into a sequential deployment scheme, we extend the FSS and CES methods to the multi-FR case. Simulation results show that our proposed FSS method achieves significant secrecy rate performance gains over various benchmark schemes, and our proposed CES method achieves a favorable runtime-performance tradeoff. These results indicate that FRs provide a low-cost and reconfigurable mechanism for PLS.
\end{abstract}
\begin{IEEEkeywords}
Flexible reflector (FR), physical layer security (PLS), secrecy rate, Fibonacci search, cross-entropy
\end{IEEEkeywords}

\section{Introduction}\label{Section:Introduction}
\bstctlcite{IEEEexample:BSTcontrol}
\IEEEPARstart{D}ue to the broadcast nature of wireless propagation, secure transmission remains a fundamental challenge in modern wireless communication systems. As sensitive traffic continues to proliferate and the \ac{6G} emerge, confidentiality requirements become increasingly stringent \cite{mao2023security}. Although conventional cryptographic techniques remain indispensable, their high computational complexity and key management overhead render them inadequate for highly dynamic and latency-sensitive wireless networks~\cite{yang20265g}. \Ac{PLS} exploits the inherent randomness and spatial selectivity of wireless channels to enhance confidentiality by creating channel quality differences between legitimate receivers and eavesdroppers, without fully relying on upper-layer encryption~\cite{niu2026survey}. The fundamental issue of \ac{PLS} lies in how to reshape signal propagations in a flexible and easily implementable manner, to realize channel differentiation between the legitimate receivers and eavesdroppers. Traditional methods include secure beamforming~\cite{hu2025secure,luo2024joint}, \ac{AN} injection~\cite{niu2026survey}, and cooperative jamming~\cite{zhang2015cooperative}.

Towards the upcoming \ac{6G}, \ac{XL-MIMO} is considered as a promising candidate technology \cite{Lu2024tutorial}. By deploying massive antenna arrays, \ac{XL-MIMO} harnesses spatial diversity to boost channel capacity and transmission reliability via beamforming and spatial multiplexing \cite{wang2024tutorial}. \cite{guo2026effective} proposed an \ac{AN}-aided low-complexity beamforming scheme with closed-form power allocation to ensure \ac{PLS} in near-field \ac{XL-MIMO} systems, showing that beamfocusing yields significant \ac{PLS} gains over far-field communications. Nevertheless, \ac{XL-MIMO} suffers from critical practical limitations: the large number of antenna elements incurs prohibitive hardware costs and high power consumption \cite{zhi2024performance}, while high-dimensional signal processing yields formidable computational overhead for channel estimation and transmit beamforming, making computational complexity a major obstacle to practical implementations \cite{wang2024extremely}. These limitations make it appealing to seek alternatives that can reshape the propagation environment without relying on large-scale active arrays. 

Along this line, \ac{RIS} has emerged as a promising alternative paradigm for \ac{6G}~\cite{wu2021intelligent}. Composed of numerous passive reflecting elements with electronically tunable electromagnetic responses, the RIS actively reshapes wireless channels by collaboratively adjusting the phase, amplitude, and polarization of incident electromagnetic waves without requiring active \ac{RF} transceivers \cite{liu2021reconfigurablea,zhang2023active}. Deployed in blocked areas or cell edges, the RIS constructs directional reflected beams to mitigate severe path loss under \ac{NLoS} conditions, and suppresses multiuser interference as well as wiretapping signals via destructive scattering. It has attracted widespread attention for its application in diverse scenarios, including non-orthogonal multiple access systems~\cite{liu2022reconfigurable,li2021joint}, \ac{PLS}~\cite{cui2019securea,khoshafa2025risassisted}, and \ac{ISAC}~\cite{chu2024jointa, yu2024active}. Nevertheless, large-scale RIS-aided communication systems still suffer from complicated channel modeling, heavy channel signaling overhead, and increased hardware costs incurred by massive phase control units~\cite{wu2021intelligent}.

Different from the electrical reconfiguration paradigm discussed above, another class of schemes focused on mechanical controllability to exploit spatial \acp{DoF}. \Acp{RA}~\cite{zhao2026reconfigurable}, such as movable antennas~\cite{zhu2024movable,zhu2026tutorial}, fluid antennas~\cite{wu2026fluid}, pinching antennas~\cite{xu2026generalized}, and rotatable antennas~\cite{zheng2026rotatable}, can actively and dynamically adjust operating parameters, such as spatial position, array configuration, and coverage orientation, to meet user requirement in different scenarios. %
\Acp{RA} is capable of dynamically shortening the distance between communication nodes, bypassing obstacles, and reconfiguring array geometric layouts, which makes it suitable for \ac{ISAC} \cite{ding2026movable,zhou2024fluid,li2026joint}, \ac{PLS}~\cite{ding2025movable,dai2025rotatable,wu2026fluida,zhu2026pinchingantenna}, and other scenarios.

Following the idea of exploiting spatial \acp{DoF}, \acp{FR}, which are passive metallic reflectors mounted on controllable mechanical platforms, are recently introduced to offer a promising alternative \cite{yu2023wireless, lu2025wireless}. Passive metallic reflectors have long been employed to redirect electromagnetic waves in microwave \ac{NLoS} links and telescope reflector design. To be specific, passive metallic reflectors are used for shadowing mitigation in outdoor microwave \ac{NLoS} communications \cite{brady1952reflection}, beam control of reflector antennas \cite{chang2005manufacture}, radio signal focusing for radio telescopes \cite{shvalb2015flexiblerobotic}, and coverage enhancement for indoor blocked mmwave links \cite{balanis2016antenna}. %

Compared with \ac{RIS}, \acp{FR} can be deployed, removed, and repositioned at a low hardware cost, without active \ac{RF} circuitry or external power supply for signal manipulation. Furthermore, unlike \ac{RIS} and \ac{XL-MIMO} that require continuous reconfiguration according to instantaneous \ac{CSI}, each FR only relies on the deployment positions of communication nodes. By adjusting its position and rotation angle, the \acp{FR} can reflect signals toward the desired directions. Once configured, the reflectors remain static during operation and require no real-time \ac{CSI} acquisition or control signaling, which substantially reduce the operational overhead.

The existing studies on \acp{FR} have established a general reflection model for metal reflectors of arbitrary size and orientation \cite{yu2023wireless}, and investigated joint position and rotation design for coverage enhancement and received-power maximization \cite{lu2025wireless}. However, their potential for secrecy-oriented propagation shaping remains unexplored. Since the reflected field of an \ac{FR} is highly directional and concentrates energy around the main beam \cite{yu2023wireless}, jointly controlling its position and rotation allows the main beam to be aligned toward the legitimate receiver while steering the eavesdropper toward a low-power sidelobe. This yields a new geometric \ac{DoF} for secrecy enhancement, which is fundamentally distinct from the phase-level reconfiguration of \ac{RIS}.

Motivated by these observations, we study the \ac{FR}-aided secure communication system, where the controller adjusts the \ac{FR} positions and rotation angles according to the locations of the legitimate receiver and the eavesdropper. Our objective is to maximize the secrecy rate subject to geometric feasibility and mutual-blocking constraints. To the best of our knowledge, this is the first work to exploit the joint position and rotation reconfigurability of \acp{FR} for \ac{PLS}. The main contributions of this paper are summarized as follows:
\begin{itemize}[\IEEEsetlabelwidth{12}]
\item{We establish an \ac{FR}-aided secure communication model for \ac{NLoS} wireless networks, where the positions and rotations of the \acp{FR} can be flexibly controlled to steer the reflected beam. Based on the \ac{FR} reflection model, we formulate secrecy rate maximization problems for both single-\ac{FR} and multi-\ac{FR} systems under feasible-reflection and mutual-blocking constraints.}
\item{Due to the nonconvex and multimodal characteristics of the secrecy rate function, we develop a \ac{FSS} for the single-\ac{FR} case firstly, which partitions the feasible region into multiple unimodal intervals and obtains the solution with Fibonacci search over the unimodal intervals. }
\item{To reduce the computational cost, we further propose a \ac{CES} method for the single-FR case, which employs \ac{GMM}-based adaptive importance sampling to localize promising search intervals before deterministic refinement.}
\item{After deriving the blocked placement interval induced by each deployed \ac{FR} and incorporating it into a sequential deployment scheme, we extend both \ac{FSS} and \ac{CES} methods to the multi-\ac{FR} case.}
\item{Simulation results demonstrate that our proposed FSS achieves significant secrecy rate performance gains over various benchmark schemes, and our proposed \ac{CES} method offers a favorable runtime-performance tradeoff. }
\end{itemize}

The remainder of this paper is organized as follows. Section~\ref{section:model} presents the system model and formulates the \ac{FR}-aided secrecy rate maximization problem. Section~\ref{sec:single_FR} studies the single-\ac{FR} case and develops both the \ac{FSS} method and the \ac{CES} method. Section~\ref{sec:multi_FR} extends the design to multi-\ac{FR} systems by incorporating blocked placement intervals for both \ac{FSS} and \ac{CES} methods. Section~\ref{section:simulation} reports simulation results. Section~\ref{section:conclusion} concludes the paper.

\emph{Notations:} Scalars are denoted by italic letters. Vectors and matrices are denoted by boldface lowercase and uppercase letters, respectively. For a real number $x$, $[x]^+$ denotes $\max(x,0)$. $\boldsymbol{x}^\top$ denotes the transpose of vector $\boldsymbol{x}$. The operator $\mathbb{E}(\cdot)$ represents statistical expectation. The operator $\times$ denotes the Cartesian product of two sets. $|\cdot|$ and $\left\|\cdot\right\|$ denote the magnitude of a complex scalar and the Euclidean norm of a vector, respectively. 
$\mathcal{N}(\mathbf{x};\boldsymbol{\mu},\boldsymbol{\Sigma})
\triangleq \frac{1}{(2\pi)^{D/2}\,\lvert\boldsymbol{\Sigma}\rvert^{1/2}}
\exp\!\Bigl(-\frac{1}{2}(\mathbf{x}-\boldsymbol{\mu})^{\top}
\boldsymbol{\Sigma}^{-1}(\mathbf{x}-\boldsymbol{\mu})\Bigr)$
denotes the \ac{PDF} of a multivariate Gaussian variable $\mathbf{x}\in\mathbb{R}^{D}$ with mean vector $\boldsymbol{\mu}$ and covariance matrix $\boldsymbol{\Sigma}$.
$\mathcal{CN}(0,\sigma^2)$ denotes the circularly symmetric complex Gaussian distribution with zero mean and variance $\sigma^2$.

\section{System Model and Problem Formulation}\label{section:model}

\begin{figure}[!t]
\centering
\includegraphics[width=\linewidth]{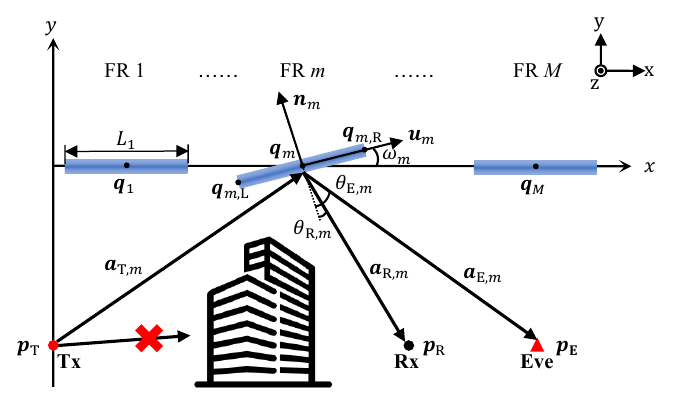}
\caption{Geometry model of the FR-aided secure communication system.}
\label{fig:model_1}
\end{figure}

As shown in Fig.~\ref{fig:model_1}, we consider a multi-\ac{FR} aided secure wireless communication system, which consists of a single-antenna \ac{Tx}, a single-antenna legitimate receiver \ac{Rx}, a single-antenna \ac{Eve}, and $M$ \acp{FR}. The positions of \ac{Tx}, \ac{Rx}, and \ac{Eve} are denoted by $ \boldsymbol{p}_\mathrm{T} = \left[ x_\mathrm{T}, y_\mathrm{T}, 0 \right]^{\top} $, $ \boldsymbol{p}_\mathrm{R} = \left[ x_\mathrm{R}, y_\mathrm{R}, 0 \right]^{\top} $, and $ \boldsymbol{p}_\mathrm{E} = \left[ x_\mathrm{E}, y_\mathrm{E}, 0 \right]^{\top} $, respectively. The direct \ac{LoS} path between \ac{Tx} and \ac{Rx} is blocked. \ac{Eve} is located on the same side as \ac{Rx} and attempts to intercept the information transmitted from \ac{Tx} to \ac{Rx}.  $M$ \acp{FR} are deployed to establish a secure communication link from \ac{Tx} to \ac{Rx}, while limiting information leakage to \ac{Eve}.

Each \ac{FR} has a size of $ L_1 \times L_2 $, where $L_1$ is its length along the $x$-axis direction and $L_2$ is its width along the $y$-axis direction. All the \acp{FR} are assumed to be rotatable and movable. For analytical tractability, we assume that the \acp{FR} are deployed on the $x$-axis within the interval $[x_{\min} , x_{\max}]$. The center of the $m$-th \ac{FR} is denoted by $ \boldsymbol{q}_m = \left[ x_m, 0, 0 \right]^{\top} $, for $ m \in  \mathcal{M} \triangleq \{ 1,\cdots ,M\}$. Let the left and right endpoints of the \ac{FR} be $\boldsymbol{q}_{m,\mathrm{L}} = [x_{m,\mathrm{L}}, y_{m,\mathrm{L}}, 0]^{\top} = [x_m - (L_1/2)\cos \omega_m, - (L_1/2)\sin \omega_m, 0]^{\top}$ and $\boldsymbol{q}_{m,\mathrm{R}} = [x_{m,\mathrm{R}}, y_{m,\mathrm{R}}, 0]^{\top} = [x_m + (L_1/2)\cos \omega_m, (L_1/2)\sin \omega_m, 0]^{\top}$, respectively, where $\omega_m$ denotes the angle between the $m$-th \ac{FR} and the $x$-axis. By sign convention, a clockwise rotation corresponds to $\omega_m < 0$, whereas a counterclockwise rotation corresponds to $\omega_m > 0$. The rotation of the \(m\)-th \ac{FR} is constrained within the interval $\omega_m \in [\omega_{\min}, \omega_{\max}]$. The corresponding tangential vector $\boldsymbol{u}_m$ and normal vector $\boldsymbol{n}_m$ are given by 
$\boldsymbol{u}_m = \left[ \cos\omega_m, \sin\omega_m, 0 \right]^{\top}$ and
$\boldsymbol{n}_m = \left[ -\sin\omega_m, \cos\omega_m, 0 \right]^{\top}$, respectively.

Let \(\boldsymbol{a}_{\mathrm{T},m}\) denote the incident vector from \ac{Tx} to the center of the $m$-th \ac{FR}, which is expressed as
\begin{equation}\label{eq:transmitted_unit_vector}
\boldsymbol{a}_{\mathrm{T},m} = \boldsymbol{q}_m - \boldsymbol{p}_\mathrm{T} = [x_{m}-x_{\mathrm{T}},-y_{\mathrm{T}},0]^{\top}.
\end{equation} Similarly, the reflected vector from the center of the $m$-th \ac{FR} to \ac{Rx} or \ac{Eve} is denoted by 
\begin{equation}\label{eq:received_reflection_unit_vector}
\boldsymbol{a}_{k,m} = \boldsymbol{p}_k - \boldsymbol{q}_m = [x_{k}-x_m, y_{k},0]^{\top}, \forall k \in \{\mathrm{R}, \mathrm{E}\}.
\end{equation} 
The distance from \ac{Tx} to the \ac{FR}, from the \ac{FR} to \ac{Rx}, and from the \ac{FR} to \ac{Eve} are $d_{\mathrm{T},m} = \| \boldsymbol{a}_{\mathrm{T},m} \|$, $d_{\mathrm{R},m} = \| \boldsymbol{a}_{\mathrm{R},m} \|$, and $d_{\mathrm{E},m} = \| \boldsymbol{a}_{\mathrm{E},m} \|$, respectively.

According to the operating principle of the \ac{FR}, effective reflections require \ac{Tx} and \ac{Rx} to lie on the same side of the \ac{FR}, i.e.,
\begin{equation}\label{eq:reflection_constraint_condition}
\boldsymbol{a}_{\mathrm{T},m}^\top\boldsymbol{n}_m\boldsymbol{n}_m^\top\boldsymbol{a}_{\mathrm{R},m} \leq  0,  \  \forall m \in \mathcal{M}.
\end{equation}

For downlink transmission, the received signal at \ac{Rx} or \ac{Eve} can be given by 
\begin{equation}\label{eq:received_signal_Rx_and_Eve}
y_k = h_k s + z_k,\quad k \in \{\mathrm{R}, \mathrm{E}\},
\end{equation}
where $s$ denotes the information symbol with $\mathbb{E}[|s|^2]=P_\mathrm{T}$, $P_\mathrm{T}$ is the transmit power of the Tx, and $z_k\sim\mathcal{CN}(0, \sigma_k^2)$ is the additive white Gaussian noise at \ac{Rx} or \ac{Eve}. According to \cite{yu2023wireless,lu2025wireless}, the channel gain at \ac{Rx} or \ac{Eve} can be represented as
\begin{equation}
|h_k|^2=\sum_{m=1}^M\frac{ \varsigma_{k,m} \lambda_{\mathrm{c}}^2 }{{(4\pi)^3}d_{\mathrm{T},m}^2 d_{k,m}^2}, \forall k \in \{\mathrm{R}, \mathrm{E}\},
\end{equation}
where $\lambda_{\mathrm{c}}$ is the carrier wavelength, $\varsigma_{k,m}$ is the \ac{RCS} of the $m$-th \ac{FR} for the signal reflected toward \ac{Rx} or \ac{Eve}, which can be expressed as
\begin{equation}\label{eq:rcs}
\begin{aligned}
\varsigma_{k,m} = \varsigma_{\mathrm{max}} \eta_{k,m}\operatorname{sinc}^2\left(\frac {\pi L_1 \Delta_{k,m}}{\lambda_{\mathrm{c}}}\right), , \forall k \in \{\mathrm{R}, \mathrm{E}\}.
\end{aligned}
\end{equation}
In \eqref{eq:rcs}, $\varsigma_{\mathrm{max}} = {4\pi L_1^2 L_2^2}/{(\lambda_{\mathrm{c}})^2}$ denotes the maximum achievable \ac{RCS}. $\eta_{k,m}$ denotes the attenuation factor of the $m$-th \ac{FR} for reflecting signal to \ac{Rx} or \ac{Eve}, which can be expressed as 
\begin{equation}\label{eq:attenuation_factor}
\eta_{k,m} = \cos^2 \theta_{k,m} = {\left(\hat{\boldsymbol{a}}_{k,m}^\top \boldsymbol{n}_m\right)}^2 , \forall k \in \{\mathrm{R}, \mathrm{E}\} , 
\end{equation}
with $\theta_{k,m}$ being the reflection angle between the reflected vector $\boldsymbol{a}_{k,m}$ and the normal vector $\boldsymbol{n}_m$ of the $m$-th \ac{FR}, as shown in Fig.\ref{fig:model_1}. $\hat{\boldsymbol{a}}_{k,m} = \frac{\boldsymbol{a}_{k,m}}{\|\boldsymbol{a}_{k,m}\|}$ denotes the unit direction vector of $\boldsymbol{a}_{k,m}$.  $\Delta_{k,m}$ denotes the tangential-direction mismatch between the incident and reflected propagation directions. This parameter measures the mismatch between the actual reflected direction and the ideal specular-reflection direction, which can be represented as
\begin{equation}
    \Delta_{k,m}=\hat{\boldsymbol{a}}_{k,m}^\top\boldsymbol{u}_m - \hat{\boldsymbol{a}} _{\mathrm{T},m}^\top\boldsymbol{u}_m.
\end{equation}
Therefore, the achievable rate $R_k$ of \ac{Rx} or \ac{Eve}  can be given as
\begin{equation}\label{eq:rate_Rx_and_Eve}
R_k = \log_2\left( 1 + \frac{P_\mathrm{T}|h_k|^2}{\sigma_k^2} \right), \quad k \in \{\mathrm{R}, \mathrm{E}\}.
\end{equation} 
Accordingly, the secrecy rate $R_{\mathrm{s}}$ can be defined as
\begin{equation}\label{eq:rate_secure}
R_{\mathrm{s}} = \left[ R_{\mathrm{R}} - R_{\mathrm{E}} \right]^+.
\end{equation}
In this paper, we propose to jointly optimize the position and rotation of \acp{FR} to maximize the secrecy rate, subject to the deployment, rotation and reflection constraints. Therefore, the secrecy rate maximization problem can be formulated as
\begin{subequations}\label{eq:P1}
\begin{align}
\max_{\{x_m, \omega_m\}_{m=1}^M} & \, R_{\mathrm{s}}  \label{subeq:P1_question} \\
 \mathrm{s.t.} \quad & |x_m - x_n| \geq d_{\min}, \quad \forall \, m, n \in \mathcal{M}, \, m \neq n, \label{subeq:P1_distance}\\
 & \boldsymbol{a}_{\mathrm{T},m}^\top \boldsymbol{n}_m\boldsymbol{n}_m^\top\boldsymbol{a}_{\mathrm{R},m} \leq  0, \forall  m \in \mathcal{M}, \label{subeq:P1_reflection}\\
 & x_{\min} \leq x_m \leq x_{\max}, \forall  m \in \mathcal{M}, \label{subeq:P1_x_range}\\
 & \omega_{\min} \leq \omega_m \leq \omega_{\max}, \forall  m \in \mathcal{M}.\label{subeq:P1_omega_range}
\end{align}
\end{subequations}

In \eqref{subeq:P1_question}, the objective is to maximize the secrecy rate. Constraint \eqref{subeq:P1_distance} enforces a minimum spacing \( d_{\min} \) between \acp{FR} to avoid mutual interference, while \eqref{subeq:P1_reflection} guarantees geometric feasibility of the reflection. Constraints \eqref{subeq:P1_x_range} and \eqref{subeq:P1_omega_range} specify the feasible ranges of the position and angle of each \ac{FR}, respectively.

Problem \eqref{eq:P1} is highly nonconvex due to the $\operatorname{sinc}^2(\cdot)$ term, the fractional structure of the objective, and the coupled geometry constraints, which makes it intractable for off-the-shelf solvers. To address this, we firstly propose an \ac{FSS} method and a low-complexity \ac{CES} method for the single-\ac{FR} case in Section~\ref{sec:single_FR}. Subsequently, we extend our proposed \ac{FSS} and \ac{CES} methods to multiple \acp{FR} in Section~\ref{sec:multi_FR}.

\section{Joint Position and Rotation Optimization for Single-FR Case} \label{sec:single_FR}
\subsection{Proposed FSS Method}
For the single-\ac{FR} case, problem \eqref{eq:P1} reduces to
\begin{subequations}\label{eq:P2}
\begin{align}
\max_{x_1, \omega_1} \quad  & \tilde{R}_{\mathrm{s}}(x_1, \omega_1)  \label{subeq:P2_question} \\
\mathrm{s.t.} \quad &  \boldsymbol{a}_{\mathrm{T},1}^\top \boldsymbol{n}_1\boldsymbol{n}_1^\top\boldsymbol{a}_{\mathrm{R},1} \leq  0,\label{subeq:P2_reflection} \\
 & x_{\min} \leq x_1 \leq x_{\max}, \label{subeq:P2_x_range}\\
 & \omega_{\min} \leq \omega_1 \leq \omega_{\max}, \label{subeq:P2_omega_range}
\end{align}
\end{subequations}
where $\tilde{R}_{\mathrm{s}}(x_1, \omega_1) = {(\beta_{\mathrm{R}}\sigma^2_{\mathrm{E}}+\sigma^2_{\mathrm{R}}\sigma^2_{\mathrm{E}})}/{(\beta_{\mathrm{E}} \sigma_{\mathrm{R}}^2+\sigma^2_{\mathrm{E}}\sigma^2_{\mathrm{R}})}$, $\beta_k = P_\mathrm{T} |h_k|^2$, for $k \in \{\mathrm{R}, \mathrm{E}\}$.

\begin{figure}[!htbp]
\centering
\begin{subfigure}{0.95\linewidth}
\centering
\includegraphics[width=\linewidth]{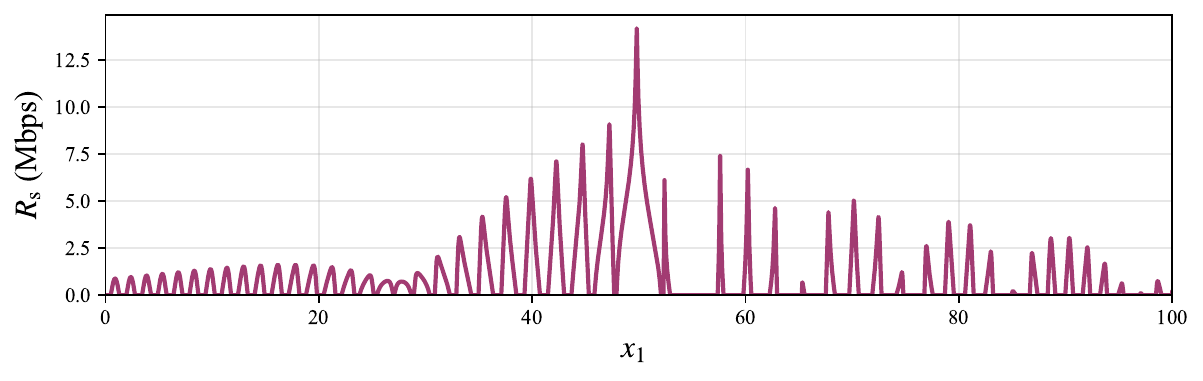}
\caption{Secrecy rate, $R_{\mathrm{s}}$, versus the position, $x_1$, where $\omega_1=0^\circ$.}
\label{fig:sub:a}
\end{subfigure}

\begin{subfigure}{0.95\linewidth}
\centering
\includegraphics[width=\linewidth]{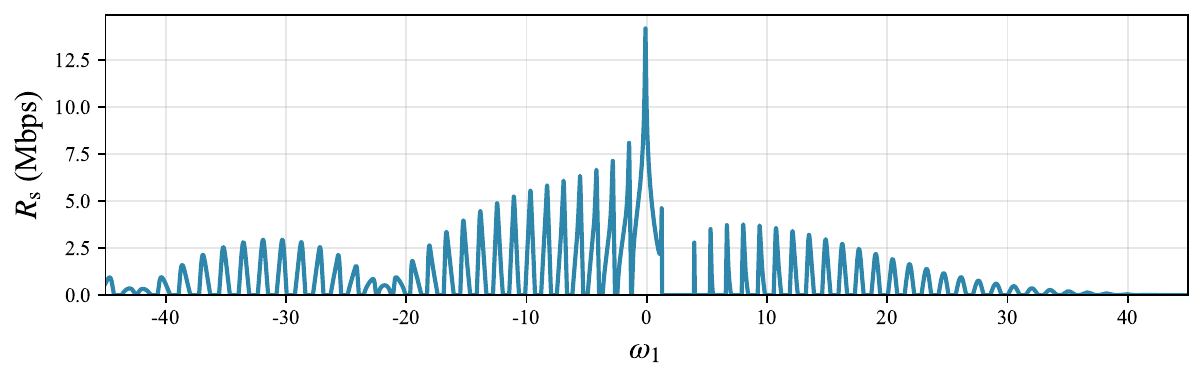}
\caption{Secrecy rate, $R_{\mathrm{s}}$, versus the rotation, $\omega_1$, where $x_1=50$ m.}
\label{fig:sub:b}
\end{subfigure}

\caption{Illustration of the nonconvex and multimodal characteristics of secrecy rate $R_{\mathrm{s}}$.}
\label{fig:Non-convex-diagram}
\end{figure}
Fig.~\ref{fig:Non-convex-diagram} illustrates the nonconvex and multimodal characteristics of the secrecy rate $R_{\mathrm{s}}$ under the system parameters: \ac{Tx}, \ac{Rx}, and \ac{Eve} are located at $[0,-50,0]^\top$, $[100,-50,0]^\top$, and $[110,-50,0]^\top$, respectively, the transmit power is $P_\mathrm{T} = 30$ dBm, the carrier frequency is $2.4$ GHz, the bandwidth is $1$ MHz, the size of the \ac{FR} is $L_1 \times L_2 = 1.25$ m $\times$ $0.65$ m, the feasible region is $x_1 \in [0, 100]$ m and $\omega_1 \in [-45\degree, 45\degree]$. In Fig.~\ref{fig:Non-convex-diagram}(a), $R_{\mathrm{s}}$ is plotted versus the position $x_1$ for a fixed angle $\omega_1 = 0\degree$, whereas in Fig.~\ref{fig:Non-convex-diagram}(b), $R_{\mathrm{s}}$ is plotted versus the rotation angle $\omega_1$ for a fixed position $x_1 = 50$ m. The nonconvex structure of $\operatorname{sinc}^2(\cdot)$ function, together with the angle- and position-dependent attenuation factor $\eta_{k,1}$, produce multiple local maxima along both axes, so that each one-dimensional profile is multimodal rather than unimodal. As a result, conventional one-dimensional search methods, such as golden-section or Fibonacci search, are prone to being trapped in a local optimum, which motivate the segmented search method developed below.

To address the nonconvex and multimodal issues presented in Fig.~\ref{fig:Non-convex-diagram}, we propose to transform problem \eqref{eq:P2} into a two-layer hierarchical search problem, which is formulated as%
\begin{subequations}\label{eq:P4}%
\begin{align}%
\max_{x_1} \quad &\hat{R}_{\mathrm{s}}(x_1)\label{subeq:P4_question} \\
\mathrm{s.t.} \quad  & x_{\min} \leq x_1 \leq x_{\max}. \label{subeq:P4_x_range}
\end{align}
\end{subequations}
Here, $\hat{R}_{\mathrm{s}}(x_1)$ is the maximum achievable secrecy rate obtained by optimizing the rotation angle $\omega_1$ over its feasible range with a given position $x_1$, which is the optimum of the following problem%
\begin{subequations}\label{eq:P3}%
\begin{align}%
\max_{\omega_1}\quad &\tilde{R}_{\mathrm{s}}\left(x_1,\omega_1\right) \label{subeq:P3_question} \\
\mathrm{s.t.}\quad &\boldsymbol{a}_{\mathrm{T},1}^\top \boldsymbol{n}_1\boldsymbol{n}_1^\top\boldsymbol{a}_{\mathrm{R},1} \leq  0,\label{subeq:P3_reflection} \\
&\omega_{\min} \leq \omega_1 \leq \omega_{\max}.\label{subeq:P3_omega_range}
\end{align}
\end{subequations}

For constraint \eqref{subeq:P3_reflection}, when $x_1$ is fixed, it can be transformed into a constraint with respect to the range of $\omega_1$ using the following Lemma~\ref{lemma2}.

\begin{lemma}\label{lemma2}
When $x_1$ is fixed, \eqref{subeq:P3_reflection} can be equivalently expressed as
\begin{equation}\label{eq:lemma2}
\begin{aligned}
-\arccos\left( \frac{\boldsymbol{a}_{\mathrm{R},1}^\top \boldsymbol{e}_x}{d_{\mathrm{R},1}} \right) \leq \omega_1 \leq \arccos \left( \frac{\boldsymbol{a}_{\mathrm{T},1}^\top \boldsymbol{e}_x}{d_{\mathrm{T},1}} \right),
\end{aligned}
\end{equation}
where $\boldsymbol{e}_x$ denotes the unit vector along the $x$-axis.
\end{lemma}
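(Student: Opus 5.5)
The plan is to put the reflection constraint \eqref{subeq:P3_reflection} into polar form and reduce it to a sign condition on two sines. By \eqref{eq:transmitted_unit_vector} and \eqref{eq:received_reflection_unit_vector}, the $y$-components of $\boldsymbol{a}_{\mathrm{T},1}$ and $\boldsymbol{a}_{\mathrm{R},1}$ are $-y_{\mathrm{T}}$ and $y_{\mathrm{R}}$. Since Tx and Rx lie on the same side of the $x$-axis (below it in Fig.~\ref{fig:model_1}, so $y_{\mathrm{T}},y_{\mathrm{R}}<0$), I write
\begin{equation*}
\boldsymbol{a}_{\mathrm{T},1}=d_{\mathrm{T},1}[\cos\alpha,\sin\alpha,0]^\top,\qquad \boldsymbol{a}_{\mathrm{R},1}=d_{\mathrm{R},1}[\cos\beta,\sin\beta,0]^\top,
\end{equation*}
with $\alpha\in(0,\pi)$ and $\beta\in(-\pi,0)$. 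Taking the inner product with $\boldsymbol{e}_x$ then gives
\begin{equation*}
\alpha=\arccos\bigl(\boldsymbol{a}_{\mathrm{T},1}^\top\boldsymbol{e}_x/d_{\mathrm{T},1}\bigr),\qquad \beta=-\arccos\bigl(\boldsymbol{a}_{\mathrm{R},1}^\top\boldsymbol{e}_x/d_{\mathrm{R},1}\bigr).
\end{equation*}
The sign of $\beta$ comes from $y_{\mathrm{R}}<0$, and this is exactly where the minus sign on the left of \eqref{eq:lemma2} originates.

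Next I substitute $\boldsymbol{n}_1=[-\sin\omega_1,\cos\omega_1,0]^\top$. The angle-difference identity gives
\begin{equation*}
\boldsymbol{a}_{\mathrm{T},1}^\top\boldsymbol{n}_1=d_{\mathrm{T},1}\sin(\alpha-\omega_1),\qquad \boldsymbol{a}_{\mathrm{R},1}^\top\boldsymbol{n}_1=d_{\mathrm{R},1}\sin(\beta-\omega_1).
\end{equation*}
Since the distances are positive, \eqref{subeq:P3_reflection} becomes $\sin(\alpha-\omega_1)\sin(\beta-\omega_1)\le 0$: the two sines have opposite signs or one vanishes.

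I then split into the two sign patterns, working on $\omega_1\in[\omega_{\min},\omega_{\max}]\subset(-\pi/2,\pi/2)$.
\begin{itemize}
\item \emph{Pattern (i):} $\sin(\alpha-\omega_1)\ge0$ and $\sin(\beta-\omega_1)\le0$. Because $\alpha-\omega_1\in(-\pi/2,3\pi/2)$ and $\beta-\omega_1\in(-3\pi/2,\pi/2)$, this is equivalent to $\alpha-\pi\le\omega_1\le\alpha$ together with $\beta\le\omega_1\le\beta+\pi$.
\item \emph{Pattern (ii):} $\sin(\alpha-\omega_1)\le0$ and $\sin(\beta-\omega_1)\ge0$. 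This requires $\omega_1\ge\alpha>0$ and $\omega_1\le\beta<0$ on the relevant branch, which is contradictory.
\end{itemize}
Pattern (i) is also the physically meaningful one: the upward-pointing normal meets the incident ray positively and the outgoing ray negatively. Pattern (i) therefore yields $\beta\le\omega_1\le\alpha$, i.e.\ \eqref{eq:lemma2}, provided the bounds $\omega_1\ge\alpha-\pi$ and $\omega_1\le\beta+\pi$ are inactive.

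The main obstacle is justifying that these two extra bounds are redundant. I would first check whether they follow from $\beta\le\omega_1\le\alpha$ together with $|\omega_1|<\pi/2$. This works, for instance, when $\alpha\le\pi/2+\cdot$ and $\beta\ge-\pi/2-\cdot$ appropriately, but it is not automatic in general. If it does not follow, I would instead invoke the geometric meaning of ``effective reflection'': the FR tilt must keep Tx and Rx in the illuminated half-plane of the normal. This amounts to imposing $\boldsymbol{a}_{\mathrm{T},1}^\top\boldsymbol{n}_1\ge0\ge\boldsymbol{a}_{\mathrm{R},1}^\top\boldsymbol{n}_1$ directly, which combined with the restricted rotation range $[\omega_{\min},\omega_{\max}]$ (e.g.\ $\pm45^\circ$ in the setup) leaves exactly the interval in \eqref{eq:lemma2}. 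I would state explicitly whichever of these assumptions is needed. The reverse inclusion, that any $\omega_1$ in \eqref{eq:lemma2} satisfies \eqref{subeq:P3_reflection}, follows by reading the sine signs back.
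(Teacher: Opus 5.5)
Your reduction of \eqref{subeq:P3_reflection} to $\sin(\alpha-\omega_1)\sin(\beta-\omega_1)\le 0$ is correct, and it is cleaner than the paper's figure-based critical-angle argument. However, the step you call the main obstacle is where all the content lies, and you leave it open.

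\textbf{Pattern (ii) is not excluded.} Your dismissal covers only one of its four branches. $\sin(\alpha-\omega_1)\le 0$ means $\omega_1\ge\alpha$ or $\omega_1\le\alpha-\pi$, and $\sin(\beta-\omega_1)\ge 0$ means $\omega_1\le\beta$ or $\omega_1\ge\beta+\pi$. For instance, the branch $\omega_1\ge\max(\alpha,\beta+\pi)$ is nonempty whenever the FR lies to the right of both Tx and Rx ($\alpha<\pi/2$, $\beta<-\pi/2$).

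\textbf{Without an extra hypothesis the lemma is false, in both directions.} Take $\boldsymbol{p}_{\mathrm{T}}=[-10,-1,0]^\top$, $\boldsymbol{p}_{\mathrm{R}}=[-5,-1,0]^\top$, $x_1=0$ and $\omega_1=30^\circ$. Then $\boldsymbol{a}_{\mathrm{T},1}^\top\boldsymbol{n}_1\approx-4.13$ and $\boldsymbol{a}_{\mathrm{R},1}^\top\boldsymbol{n}_1\approx 1.63$. So \eqref{subeq:P3_reflection} holds, yet $\omega_1$ exceeds the claimed upper bound $\arccos(10/\sqrt{101})\approx 5.7^\circ$. For the other direction, take $\boldsymbol{p}_{\mathrm{T}}=[10,-1,0]^\top$, $\boldsymbol{p}_{\mathrm{R}}=[1,-1,0]^\top$, $x_1=0$ and $\omega_1=-20^\circ$. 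This $\omega_1$ lies in \eqref{eq:lemma2}, since $\beta=-45^\circ$ and $\alpha\approx 174.3^\circ$. But $\boldsymbol{a}_{\mathrm{T},1}^\top\boldsymbol{n}_1\approx-2.48$ and $\boldsymbol{a}_{\mathrm{R},1}^\top\boldsymbol{n}_1\approx-0.60$, so the constraint fails. Hence your claim that the reverse inclusion ``follows by reading the sine signs back'' is also unjustified.

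\textbf{The fallback does not repair this.} Imposing $\boldsymbol{a}_{\mathrm{T},1}^\top\boldsymbol{n}_1\ge 0\ge\boldsymbol{a}_{\mathrm{R},1}^\top\boldsymbol{n}_1$ replaces the constraint you are supposed to characterize. Even then, pattern (i) is $[\max(\alpha-\pi,\beta),\min(\alpha,\beta+\pi)]$. In the second example the bound $\alpha-\pi\approx-5.7^\circ$ survives the $\pm 45^\circ$ restriction, so you do not recover the interval in \eqref{eq:lemma2}.

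\textbf{How to close the gap.} The missing ingredient is the configuration that the paper's figures and simulation setup ($x_{\mathrm{T}}=x_{\min}=0$, $x_{\mathrm{R}}=x_{\max}=100$) tacitly assume, and that its own proof uses implicitly. State it as a hypothesis: $y_{\mathrm{T}},y_{\mathrm{R}}<0$, $x_{\mathrm{T}}\le x_1\le x_{\mathrm{R}}$, and $|\omega_1|<\pi/2$. This gives $\alpha\in(0,\pi/2]$ and $\beta\in[-\pi/2,0)$, hence $\alpha-\omega_1\in(-\pi/2,\pi)$ and $\beta-\omega_1\in(-\pi,\pi/2)$. On these ranges, $\sin(\alpha-\omega_1)\ge 0\iff\omega_1\le\alpha$ and $\sin(\beta-\omega_1)\le 0\iff\omega_1\ge\beta$. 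So pattern (i) is exactly $\{\beta\le\omega_1\le\alpha\}$. Pattern (ii) collapses to $\omega_1\ge\alpha>0>\beta\ge\omega_1$, which is empty. Both directions of the equivalence then follow at once, with no appeal to the physical meaning of the constraint.
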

\begin{IEEEproof}
Please refer to Appendix \ref{proof of the lemma 2}.
\end{IEEEproof}

By applying Lemma~\ref{lemma2}, \eqref{subeq:P3_reflection} can be simplified to \eqref{eq:lemma2}, which is only influenced by $\omega_1$. By integrating \eqref{eq:lemma2} with \eqref{subeq:P3_omega_range}, problem \eqref{eq:P3} can be equivalently expressed as
\begin{subequations}\label{eq:P6}
\begin{align}
\max_{\omega_1}\quad &\tilde{R}_{\mathrm{s}}\left(x_1, \omega_1\right) \label{subeq:P6_question} \\
\mathrm{s.t.}\quad & \bar{\omega}_{\min} \leq \omega_1 \leq \bar{\omega}_{\max},\label{subeq:P6_omega_range}
\end{align}
\end{subequations}
where $\bar{\omega}_{\min} = \max(\omega_{\min}, -\arccos({\boldsymbol{a}_{\mathrm{R},1}^\top \boldsymbol{e}_x}/{d_{\mathrm{R},1}}))$ and $\bar{\omega}_{\max} = \min(\omega_{\max}, \arccos({\boldsymbol{a}_{\mathrm{T},1}^\top \boldsymbol{e}_x}/{d_{\mathrm{T},1}}))$.

\begin{algorithm}[!htbp]
\caption{Proposed \ac{FSS} Method for Optimizing a Single-Variable Multimodal Function}
\begin{algorithmic}[1]
\State \textbf{Input}: $f(t)$, $t_{\min}$, $t_{\max}$,  $N$, $L$, $\varepsilon$.
\Statex /* Segmenting into unimodal subintervals */
\State $\bar{t}_1 = t_{\min}$, $\delta_t = \dfrac{t_{\max}-t_{\min}}{L}$, $u=1$.\label{alg1:step-divide-interval-begin}
\For{$i = 2$ \textbf{to} $L$}
    \State $\bar{t}_i = \bar{t}_{i-1} + \delta_t$, $\bar{t}_{i+1} = \bar{t}_i + \delta_t$.
    \If{$f(\bar{t}_i)-f(\bar{t}_{i-1})\ge 0$ \textbf{and} $f(\bar{t}_i)-f(\bar{t}_{i+1})\ge 0$}
        \State $\mathcal{S}_u = [\bar{t}_{i-1},\, \bar{t}_{i+1}]$.
        \State $u = u+1$.
    \EndIf
\EndFor\label{alg1:step-divide-interval-end}
\Statex /* Searching within each unimodal subinterval */
\For{$j = 1$ \textbf{to} $u-1$}
 \Statex \quad /* Fibonacci search */
    \State $a_0 = \min \mathcal{S}_j,\quad b_0 = \max \mathcal{S}_j$. \label{alg1:step-fibonacci-search-begin}
    \For{$i = 1$ \textbf{to} $N-1$}\label{alg1:step-fibonacci-update-begin}
        \State Compute $\rho_i = 1-\frac{F_{N-i+1}}{F_{N-i+2}}$ and $l_i = b_{i-1} - a_{i-1}$.
        \State Update $c_i = a_{i-1} + \rho_i  l_i$ and $d_i = b_{i-1} - \rho_i  l_i$.
        \If{$f(c_i) \ge f(d_i)$}
            \State $a_{i} = a_{i-1},\quad b_{i} = d_i$.
        \Else
            \State $b_{i} = b_{i-1},\quad a_{i} = c_i$.
        \EndIf
        \If{$b_{i} - a_{i} < \varepsilon$}
            \State \textbf{break}
        \EndIf
    \EndFor\label{alg1:step-fibonacci-update-end}
    \State Compute $t_j^* = \frac{a_i + b_i}{2}$. \label{alg1:step-fibonacci-search-end}
\EndFor
\State \textbf{Output}: $t^\dagger= \underset{t\in\{t_1^*,\ldots,t_{u-1}^*\}}{\arg\max}\, f(t)$.
\end{algorithmic}\label{alg:alg1}
\end{algorithm}
To solve \eqref{eq:P4} and \eqref{eq:P6}, we propose the Fibonacci segmented search (FSS) method as detailed in Algorithm~\ref{alg:alg1}. 

In Algorithm~\ref{alg:alg1}, $f(t)$ denotes the objective function to be optimized, $t_{\min}$ and $t_{\max}$ correspond to the lower and upper bounds of the optimization variable $t$, respectively. 

In Steps \ref{alg1:step-divide-interval-begin}--\ref{alg1:step-divide-interval-end} of Algorithm \ref{alg:alg1}, the multimodal objective function is partitioned into multiple unimodal intervals firstly. All valid unimodal intervals are then filtered and preserved, with each such interval recorded as $\mathcal{S}_u$.

In Steps \ref{alg1:step-fibonacci-search-begin}--\ref{alg1:step-fibonacci-search-end} of Algorithm \ref{alg:alg1}, the Fibonacci search based one-dimensional search is executed to obtain the local optimal solution $t_j^*$ within each unimodal interval $\mathcal{S}_j \triangleq [a_0,b_0]$. $N$ denotes the maximum number of iterations and is selected such that the $N$-th Fibonacci number satisfies $F_N \geq \frac{b_0 - a_0}{\varepsilon}$, where $F_n$ denotes the $n$-th term of the Fibonacci sequence, defined as 
\begin{equation}\label{eq:fibonacci_sequence}
\begin{aligned}
F_n = F_{n-1} + F_{n-2}, \; n \geq 2, 
\end{aligned}
\end{equation}
with $F_0 = F_1 = 1$.

Steps \ref{alg1:step-fibonacci-update-begin}--\ref{alg1:step-fibonacci-update-end} of Algorithm \ref{alg:alg1} define the Fibonacci search procedure. In the $i$-th iteration, $\rho_i$ and $l_i$ denote the step size and the interval length, respectively. Firstly, candidate update endpoints $c_i$ and $d_i$ are computed. Then, $f(c_i)$ and $f(d_i)$ are compared to determine which endpoint should be updated for the subsequent iteration. 

Finally, the optimal solution $t^{\dagger}$ is selected from all local optima $t_j^*$ obtained from each unimodal interval $\mathcal{S}_j$.

For simplicity, we define the Algorithm~\ref{alg:alg1} as
\begin{equation}\label{eq:FS_function}
\begin{aligned}
t^\dagger = \mathrm{FSS}\left(f(t), t_{\min}, t_{\max}\right).
\end{aligned}
\end{equation}

Therefore, problem \eqref{eq:P4} can be solved as
\begin{equation}\label{eq:x_optimal}
\begin{aligned}
x_1^\dagger = \mathrm{FSS}\left(\hat{R}_{\mathrm{s}}(x_1), x_{\min}, x _{\max}\right),
\end{aligned}
\end{equation}
where $\hat{R}_{\mathrm{s}}(x_1)$ can be expressed as
\begin{equation}\label{eq:hat_R_s}
\begin{aligned}
\hat{R}_{\mathrm{s}}(x_1)=\max _{\omega_1 \in [\bar{\omega}_{\min}, \bar{\omega}_{\max}]} \tilde{R}_{\mathrm{s}}\left(x_1, \omega_1\right).
\end{aligned}
\end{equation}
To solve problem \eqref{eq:hat_R_s}, the FSS method is adopted to optimize $\tilde{R}_\mathrm{s}(\omega_1;x_1)$ within $[\bar{\omega}_{\min}, \bar{\omega}_{\max}]$ with fixed $x_1$, which is formulated as
\begin{equation}\label{eq:omega_optimal}
\begin{aligned}
\omega_1^\dagger = \mathrm{FSS}\left(\tilde{R}_{\mathrm{s}}(\omega_1;x_1), \bar{\omega}_{\min}, \bar{\omega}_{\max}\right).
\end{aligned}
\end{equation}
To summarize, our proposed FSS method for solving single-FR deployment problem \eqref{eq:P2} is detailed in Algorithm~\ref{alg:alg2}. 

\begin{algorithm}[!htbp]
\caption{Proposed FSS Method for Solving Single-FR Deployment Problem \eqref{eq:P2}}
\begin{algorithmic}[1]
\State \textbf{Input}:$\tilde{R}_s(x,\omega)$, $x_{\min}, x_{\max}$, $\omega_{\min}, \omega_{\max}$, $L$, $\varepsilon$.
\State $\bar{x}_1 = x_{\min}$, $\delta_x = \dfrac{x_{\max}-x_{\min}}{L}$, $u = 1$.
\For{$i = 2$ \textbf{to} $L$}
    \State $\bar{x}_i = \bar{x}_{i-1} + \delta_x$, $\bar{x}_{i+1} = \bar{x}_i + \delta_x$.
    \If{$\hat{R}_s(\bar{x}_i) \geq \hat{R}_s(\bar{x}_{i-1})$ and $\hat{R}_s(\bar{x}_i) \geq \hat{R}_s(\bar{x}_{i+1})$}
        \State $\bar{\mathcal{S}}_u = \left[\bar{x}_{i-1}, \bar{x}_{i+1}\right]$.
        \State $u = u + 1$.
    \EndIf
\EndFor
\For{$j = 1$ \textbf{to} $u-1$}
\State Obtain $ x_{1,j}^*$ from $\bar{\mathcal{S}}_j$ via steps~\ref{alg1:step-fibonacci-search-begin}--\ref{alg1:step-fibonacci-search-end} of Algorithm~\ref{alg:alg1}.
\EndFor
\State $x^\dagger_1 = \underset{x\in\{x_{1,j}^*\}_{j = 1}^{u-1}}{\arg\max}\, \hat{R}_\mathrm{s}(x)$.
\State $\omega^\dagger_1 = \underset{\omega_1 \in [\bar{\omega}_{\min}, \bar{\omega}_{\max}]}{\arg\max} \tilde{R}_{\mathrm{s}}\left(x_1^\dagger, \omega_1\right)$.
\State \textbf{Output}:
\parbox[t]{\dimexpr\linewidth-\algorithmicindent\relax}{ $[x^\dagger_1, \omega^\dagger_1]^\top$.} %
\end{algorithmic}\label{alg:alg2}
\end{algorithm}

\subsection{Proposed Low-Complexity CES Method}

Although the proposed \ac{FSS} method can solve problem \eqref{eq:P2} by traversing the entire feasible region to obtain a favorable solution, it suffers from high computational complexity and long search time. Therefore, in this subsection, a low-complexity \ac{CES} method is proposed to solve problem \eqref{eq:P2}.

For simplicity, we define $\boldsymbol{z} = [x_1, \omega_1]^\top$ and the corresponding search domain $\mathcal{A} = \mathcal{B}_1 \times \mathcal{W}$,
with $\mathcal{B}_1 = [x_{\min},x_{\max}]$ and $\mathcal{W} = [\omega_{\min},\omega_{\max}]$ denoting the feasible ranges of $x_1$ and $\omega_1$, respectively. To recast constraint \eqref{subeq:P2_reflection}, we employ the following penalty function 
\begin{equation}\label{eq:CES_objective_function_h}
\begin{aligned} 
h (\boldsymbol{z}) = R_{\mathrm{s}} - \lambda [g(\boldsymbol{z})]^+,
\end{aligned}
\end{equation}
where $g(\boldsymbol{z}) = \boldsymbol{a}_{\mathrm{T},1}^\top \boldsymbol{n}_1\boldsymbol{n}_1^\top\boldsymbol{a}_{\mathrm{R},1}$. When $g(\boldsymbol{z}) < 0$, the constraint holds and no penalty is incurred; otherwise, a penalty term is added. The penalty coefficient $\lambda$ is chosen large enough to ensure the constraint is satisfied at convergence.

The cross-entropy optimization method is an adaptive importance sampling technique originally developed for estimating rare-event probabilities and subsequently generalized to deterministic optimization \cite{rubenstein2004crossentropy}. Its core principle is to iteratively refine a parameterized probability distribution so that the sampling mass concentrates around the optimal region of the search space. Let $p_{\boldsymbol{\tau}}(\boldsymbol{z})$ denote the parameterized sampling distribution, over the search domain $\mathcal{A}$, and let $q(\boldsymbol{z})$ denote the target distribution whose probability mass concentrates on the near-optimal region of the search space. The cross-entropy method seeks the parameter $\boldsymbol{\tau}^*$ that minimizes the \ac{KL} divergence between $q(\boldsymbol{z})$ and $p_{\boldsymbol{\tau}}(\boldsymbol{z})$, yielding the following reformulation of problem \eqref{eq:P2}
\begin{equation}\label{eq:CEopt_problem}
\boldsymbol{\tau}^* = \arg\min_{\boldsymbol{\tau}} D_{\mathrm{KL}}\bigl(q(\boldsymbol{z}) \,\|\, p_{\boldsymbol{\tau}}(\boldsymbol{z})\bigr).
\end{equation}
Since the target distribution $q(\boldsymbol{z})$ is unknown, the cross-entropy method relies on elite samples to approximate it. The proposed \ac{CES} method iteratively adapts the sampling distribution so that its probability mass concentrates on regions yielding near-optimal objective values. The detailed steps of our proposed \ac{CES} method for solving single-FR deployment problem \eqref{eq:CEopt_problem} are presented in the following.

Firstly, the proposed \ac{CES} method generates $L_{\mathrm{CE}}$ samples from $\mathcal{A}$. Since the objective function is multimodal, the \ac{GMM} is adopted as the sampling distribution, which is composed of $Q$ Gaussian distribution components with means initialized uniformly over the solution space. For each component, $L_{\mathrm{CE}}/Q$ feasible points are sampled. At the $t$-th iteration, the \ac{PDF} of the \ac{GMM} is given by
\begin{equation}\label{eq:GMM_PDF}
    f_{\mathrm{GMM}} (\mathbf{x};\boldsymbol{\tau}^t ) = \sum_{q=1}^{Q} \eta_q^t \mathcal{N} (\mathbf{x};\boldsymbol{\mu}_q^t, \boldsymbol{\Sigma}_q^t),
\end{equation}
where $\boldsymbol{\tau}^t = \left\{ \eta_q^t;\ \boldsymbol{\mu}_q^t;\ \boldsymbol{\Sigma}_q^t \right\}_{q=1}^Q$ denotes the distribution parameters of the \ac{GMM}. $\eta_q^t$, $\boldsymbol{\mu}_q^t$, and $\boldsymbol{\Sigma}_q^t$ are the weight, mean vector, and covariance matrix of the $q$-th Gaussian component at the $t$-th iteration, respectively.

After sampling, the objective function $h(\hat{\mathbf{z}}^t_i)$ of the $i$-th sample $\hat{\mathbf{z}}^t_i$ is evaluated, for $i\in \{1, 2, \dots, L_\mathrm{CE}\}$. The elite set $\mathcal{X}^{t}$ is then constructed by selecting the top $K$ samples ranked by their objective values $h(\hat{\mathbf{z}}^t_i)$, i.e., $|\mathcal{X}^{t}| = K$. Each $\hat{\boldsymbol{z}}^t_i \in \mathcal{X}^{t}$ constitutes an elite solution for subsequent parameter updates. 

Subsequently, for the $i$-th elite sample $\hat{\boldsymbol{z}}^t_i \in \mathcal{X}^{t}$, the posterior probability that sample $\hat{\boldsymbol{z}}^t_i$ belongs to the $q$-th Gaussian component is computed as
\begin{equation}\label{eq:GMM_update_gamma}
\gamma_{q,i}^t = \frac{\eta_q^{t-1} \,
\mathcal{N}(\hat{\boldsymbol{z}}^t_i;\, \boldsymbol{\mu}_q^{t-1},\, \boldsymbol{\Sigma}_q^{t-1})}
{\sum_{l=1}^{Q} \eta_l^{t-1} \,
\mathcal{N}(\hat{\boldsymbol{z}}^t_i;\, \boldsymbol{\mu}_l^{t-1},\, \boldsymbol{\Sigma}_l^{t-1})},
\; \forall q \in \mathcal{Q},
\end{equation}
where $\mathcal{Q} \triangleq \{1, 2, \dots, Q\}$. Given the posterior probabilities, the \ac{PDF} parameters of the \ac{GMM} at the $t$-th iteration are updated via maximum likelihood as
\begin{subnumcases}{\label{eq:GMM_update}}
\eta_q^t = \frac{1}{K} \sum_{i=1}^{K} \gamma_{q,i}^t, \\
\boldsymbol{\mu}_q^t = \frac{\sum_{i=1}^{K} \gamma_{q,i}^t \, \hat{\boldsymbol{z}}^t_i}
{\sum_{i=1}^{K} \gamma_{q,i}^t}, \\
\boldsymbol{\Sigma}_q^t = \frac{\sum_{i=1}^{K} \gamma_{q,i}^t \,
(\hat{\boldsymbol{z}}^t_i - \boldsymbol{\mu}_q^t)(\hat{\boldsymbol{z}}^t_i - \boldsymbol{\mu}_q^t)^\top}
{\sum_{i=1}^{K} \gamma_{q,i}^t}.
\end{subnumcases}

After $T_{\mathrm{iter}}$ iterations, the estimated means $\{\boldsymbol{\mu}_q^{T_{\mathrm{iter}}}\}_{q=1}^{Q}$ concentrate near the local maxima of $h(\boldsymbol{z})$. Therefore, the \ac{FSS} method is applied to each estimated Gaussian component, using a refined search interval of half-width $\Delta_x$ centered at the mean, i.e.,
\begin{equation}\label{eq:CES_search_interval}
\mathcal{G}_q = [x_{q,\min},\, x_{q,\max}], \quad \forall q \in \mathcal{Q},
\end{equation}
where $x_{q,\min} = \max(\mu_{x,q}^{T_{\mathrm{iter}}} - \Delta_x, x_{\min})$ and $x_{q,\max} = \min(\mu_{x,q}^{T_{\mathrm{iter}}} + \Delta_x, x_{\max})$. After applying the \ac{FSS} method to each interval $\mathcal{G}_q$, a candidate solution $[x_q^*, \omega_q^*]$ is obtained, and the final solution is selected as
\begin{equation}\label{eq:CES_get_best_x}
[x^\dagger_1, \omega^\dagger_1]^\top = \underset{[x,\omega] \in \{[x_q^*,\omega_q^*]\}_{q=1}^{Q}}{\arg\max}\; \tilde{R}_\mathrm{s}(x,\omega).
\end{equation}
To summarize, our proposed low-complexity \ac{CES} method for solving single-FR deployment problem \eqref{eq:P2} is detailed in Algorithm~\ref{alg:alg3}.

\begin{algorithm}[!htbp]
\caption{Proposed CES Method for Solving Single-FR Deployment Problem \eqref{eq:P2}}
\begin{algorithmic}[1]
\State \textbf{Input}: $Q$, $L_{\mathrm{CE}}$, $K$, $T_{\mathrm{iter}}$, $\mathcal{A}$, $\sigma_{x,0}^2$, $\sigma_{\omega,0}^2$, $\Delta_x$.
\Statex /* Initialization */
\For{$q = 1$ \textbf{to} $Q$}
\State $\mu_{x,q}^0 =x_{\min} + \frac{q-\frac{1}{2}}{Q}(x_{\max}-x_{\min})$.
\State $\mu_{\omega,q}^0 =\omega_{\min} +\frac{q-\frac{1}{2}}{Q}(\omega_{\max}-\omega_{\min})$.
\State $\eta_q^0 = \dfrac{1}{Q}$.
\EndFor
\Statex /* Cross-entropy optimization iterations */
\For{$t = 0$ \textbf{to} $T_{\mathrm{iter}}-1$}
        \State \parbox[t]{\dimexpr\linewidth-\algorithmicindent\relax}{Generate $L_{\mathrm{CE}}$ samples $\{\hat{\boldsymbol{z}}_i^t\}$ according to the \ac{GMM} \ac{PDF} \eqref{eq:GMM_PDF}.}
        \State Evaluate $h(\hat{\boldsymbol{z}}_i^t)$ for all $L_{\mathrm{CE}}$ samples.
        \State Update elite set $\mathcal{X}^t$ based on the $K$ largest $h(\hat{\boldsymbol{z}}_i^t)$.
        \State Update $\boldsymbol{\tau}^{t+1}$ via \eqref{eq:GMM_update_gamma} and \eqref{eq:GMM_update}.%
\EndFor
\State Construct subregions $\{\mathcal{G}_q\}_{q=1}^Q$ using \eqref{eq:CES_search_interval} with $\boldsymbol{\tau}^{T_{\mathrm{iter}}}$.
\Statex /* \ac{FSS} refinements */
\For{$i = 1$ \textbf{to} $Q$}
    \State \parbox[t]{\dimexpr\linewidth-\algorithmicindent\relax}{ Obtain $[x_i^*,\omega_i^*]$ by Algorithm~\ref{alg:alg2} on $\mathcal{G}_i$.}
\EndFor
\State Select the best solution $[x^\dagger_1,\omega^\dagger_1]^\top$ by \eqref{eq:CES_get_best_x}.
\State \textbf{Output}: $[x^\dagger_1,\omega^\dagger_1]^\top $.
\end{algorithmic}\label{alg:alg3}
\end{algorithm}

\section{Joint Positions and Rotations for Multi-FR Case} \label{sec:multi_FR}
In this section, we extend our proposed \ac{FSS} and \ac{CES} methods to the multi-\ac{FR} case. 
\subsection{Proposed FSS Method for Multi-FR Case}
 Compared with the single-\ac{FR} case problem \eqref{eq:P2}, \eqref{eq:P1} includes additional spacing constraints to prevent mutual blockages and interferences among \acp{FR}. Building on the single-\ac{FR} solution, we optimize \acp{FR} sequentially: after placing an \ac{FR}, we compute the interval that becomes unavailable to the remaining \acp{FR} and remove it from the feasible position set before optimizing the next \ac{FR}. The sequential optimization of multiple \acp{FR} can be formulated as a series of single-\ac{FR} optimization problems, each with a reduced feasible region. 
 
 Therefore, the key issue reduces to characterizing the blocked placement interval induced by a deployed \ac{FR} in \eqref{subeq:P1_distance}. Instead of relying merely on a fixed minimum separation constant $d_{\min}$, a geometry-dependent blocked placement interval $\mathcal{E}_m = [x_{m,\mathrm{L}}, x_{m,\mathrm{R}}]$ is derived for the $m$-th \ac{FR}. The blocked placement interval $\mathcal{E}_m$ is used to replace constraint \eqref{subeq:P1_distance}. This interval specifies the region in which no additional \ac{FR} can be placed once the $m$-th \ac{FR} has been deployed. The following Lemma~\ref{lemma3} characterizes this blocked interval and provides the geometric basis for sequential multi-\ac{FR} deployment.

\begin{lemma}\label{lemma3}
For the $m$-th deployed \ac{FR} at position $[x_m, y_m, 0 ]^\top$ with rotation angle $\omega_m$, the blocked interval on the $x$-axis is $\mathcal{E}_m = [x_{m,\mathrm{L}}, x_{m,\mathrm{R}}]$, where
\begin{align*}
x_{m,\mathrm{L}} &= \min\!\left\{ \frac{\pm\frac{L_1}{2}\| \boldsymbol{n}_{\mathrm{L}}\| - c_{0,\mathrm{L}}}{n_{\mathrm{L},x}}, x_m-d_{\min} \right\}, \\
x_{m,\mathrm{R}} &= \max\!\left\{ \frac{\pm\frac{L_1}{2}\| \boldsymbol{n}_{\mathrm{R}}\| - c_{0,\mathrm{R}}}{n_{\mathrm{R},x}},  x_m+d_{\min} \right\}, \\
\boldsymbol{n}_{\mathrm{L}} &= \begin{cases}
[y_{m,\mathrm{L}} - y_{\mathrm{R}}, x_{\mathrm{R}} - x_{m,\mathrm{L}}, 0]^{\top}, & \omega_m > 0\\
[y_{m,\mathrm{L}} - y_{\mathrm{T}}, x_{\mathrm{T}} - x_{m,\mathrm{L}}, 0]^{\top}, & \omega_m < 0
\end{cases}, \\
\boldsymbol{n}_{\mathrm{R}} &= \begin{cases}
[y_{m,\mathrm{R}} - y_{\mathrm{R}}, x_{\mathrm{R}} - x_{m,\mathrm{R}}, 0]^{\top}, & \omega_m > 0\\
[y_{m,\mathrm{R}} - y_{\mathrm{T}}, x_{\mathrm{T}} - x_{m,\mathrm{R}}, 0]^{\top}, & \omega_m < 0
\end{cases}, \\
c_{0,\mathrm{L}} &= \begin{cases}
x_{m,\mathrm{L}} y_{\mathrm{R}} - x_{\mathrm{R}} y_{m,\mathrm{L}}, & \omega_m > 0\\
x_{m,\mathrm{L}} y_{\mathrm{T}} - x_{\mathrm{T}} y_{m,\mathrm{L}}, & \omega_m < 0
\end{cases}, \\
c_{0,\mathrm{R}} &= \begin{cases}
x_{m,\mathrm{R}} y_{\mathrm{R}}-x_{\mathrm{R}} y_{m,\mathrm{R}}, & \omega_m > 0\\
x_{m,\mathrm{R}} y_{\mathrm{T}}-x_{\mathrm{T}} y_{m,\mathrm{R}}, & \omega_m < 0
\end{cases},
\end{align*}
$n_{\mathrm{L},x}$ and $n_{\mathrm{R},x}$ are the first components of $\boldsymbol{n}_{\mathrm{L}}$ and $\boldsymbol{n}_{\mathrm{R}}$, respectively.
\end{lemma}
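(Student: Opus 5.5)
The blocked interval is obtained geometrically by asking where the center $[x,0,0]^\top$ of a new \ac{FR} of length $L_1$ can be placed so that its plate neither intersects the relevant ray of the deployed $m$-th \ac{FR} nor comes within $d_{\min}$ of it. Since both \ac{Tx} and \ac{Rx} lie on the side $y<0$ and the plate reflects on that side, I would use the sign of $\omega_m$ to decide which path is exposed. For $\omega_m>0$ the plate tilts so that its endpoints see \ac{Rx} across the $x$-axis, and the critical paths are the segments from $\boldsymbol{q}_{m,\mathrm{L}}$ and $\boldsymbol{q}_{m,\mathrm{R}}$ to $\boldsymbol{p}_\mathrm{R}$. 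For $\omega_m<0$ the critical paths are the segments to $\boldsymbol{p}_\mathrm{T}$. This case split is taken as given from the geometry in Fig.~\ref{fig:model_1}, in the same spirit as the reflection-side condition \eqref{eq:reflection_constraint_condition} and Lemma~\ref{lemma2}.

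First, for each endpoint $\boldsymbol{q}_{m,\mathrm{L}}$ (resp.\ $\boldsymbol{q}_{m,\mathrm{R}}$) I would write the line through it and the relevant node ($\boldsymbol{p}_\mathrm{R}$ or $\boldsymbol{p}_\mathrm{T}$) in implicit form $\boldsymbol{n}^\top\boldsymbol{p}+c_0=0$. A normal to the direction $\boldsymbol{p}_k-\boldsymbol{q}_{m,\mathrm{L}}=[x_k-x_{m,\mathrm{L}},\,y_k-y_{m,\mathrm{L}}]^\top$ is $[y_{m,\mathrm{L}}-y_k,\,x_k-x_{m,\mathrm{L}}]^\top$, which is exactly $\boldsymbol{n}_{\mathrm{L}}$. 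Substituting $\boldsymbol{p}=\boldsymbol{q}_{m,\mathrm{L}}$ gives $c_0=-(y_{m,\mathrm{L}}-y_k)x_{m,\mathrm{L}}-(x_k-x_{m,\mathrm{L}})y_{m,\mathrm{L}}=x_{m,\mathrm{L}}y_k-x_k y_{m,\mathrm{L}}$, which matches $c_{0,\mathrm{L}}$. The right endpoint is handled in the same way. This step is a routine verification.

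Second, I would model a candidate \ac{FR} centered at $[x,0,0]^\top$ conservatively as the disc of radius $L_1/2$, which covers the plate for every rotation. The plate cannot obstruct the line exactly when the distance from the center to the line exceeds $L_1/2$, i.e.\ $|n_{\mathrm{L},x}x+c_{0,\mathrm{L}}|\ge \tfrac{L_1}{2}\|\boldsymbol{n}_\mathrm{L}\|$. The boundary points are therefore $x=(\pm\tfrac{L_1}{2}\|\boldsymbol{n}_\mathrm{L}\|-c_{0,\mathrm{L}})/n_{\mathrm{L},x}$, and the forbidden set for that line is the interval between the two roots. Taking the lower root from the left-endpoint line and the upper root from the right-endpoint line, together with the spacing requirement $[x_m-d_{\min},x_m+d_{\min}]$ from \eqref{subeq:P1_distance}, the union of these intervals is the stated $\mathcal{E}_m$. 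The $\min$ and $\max$ in the statement, with the $\pm$ choice selecting the outer root, give its extremities.

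The main obstacle I expect is justifying that the union is a single interval, and that the extremities are attained at the left-endpoint line on one side and the right-endpoint line on the other. I would argue this from monotonicity: as the endpoint moves along the plate, the line to the fixed node sweeps continuously, so the forbidden intervals overlap and all contain a neighbourhood of $x_m$. This also needs $n_{\mathrm{L},x},n_{\mathrm{R},x}\neq 0$, which holds since $y_k<0$ while the endpoint ordinates $\pm\tfrac{L_1}{2}\sin\omega_m$ are small. I would also check that treating the line as a segment only, restricted to $y\le 0$ where the $x$-axis crossing occurs, does not change the roots. This holds because the center lies on the $x$-axis, between the endpoint and the node in the relevant configuration.
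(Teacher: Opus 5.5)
Your proposal is correct and follows essentially the same route as the paper. The common steps are:
\begin{itemize}
\item the implicit line through $\boldsymbol{q}_{m,\mathrm{L}}$ (resp.\ $\boldsymbol{q}_{m,\mathrm{R}}$) and $\boldsymbol{p}_\mathrm{R}$ or $\boldsymbol{p}_\mathrm{T}$, chosen by the sign of $\omega_m$ (a case split the paper also reads off its figure);
\item the condition $|n_x x+c_0|/\|\boldsymbol{n}\|=L_1/2$, which yields the two roots;
\item the $\min/\max$ with $x_m\mp d_{\min}$.
\end{itemize}
Your covering-disc rationale for the threshold $L_1/2$ is a harmless addition that the paper leaves implicit.

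One side remark in your last paragraph is wrong, though it is not needed. For $\omega_m>0$, both $\boldsymbol{q}_{m,\mathrm{L}}$ and $\boldsymbol{p}_\mathrm{R}$ lie below the $x$-axis; for $\omega_m<0$, both $\boldsymbol{q}_{m,\mathrm{R}}$ and $\boldsymbol{p}_\mathrm{T}$ do. So that line crosses the axis beyond the endpoint, not between endpoint and node, and replacing the line by the segment can move that root toward $x_m$. The stated formula, like the paper's derivation, uses the infinite-line distance and is simply conservative, so you can drop that check.
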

\begin{IEEEproof}
Please refer to Appendix \ref{proof of the lemma 3}.
\end{IEEEproof}

Within the available placement range, we use Algorithm~\ref{alg:alg2} to optimize the \acp{FR} sequentially. After placing the first \ac{FR}, we compute its blocked interval $\mathcal{E}_1=[x_{1,\mathrm{L}}, x_{1,\mathrm{R}}]$ by Lemma~\ref{lemma3}, remove it from the original search space $\mathcal{B}_1$, and obtain a new feasible set $\mathcal{B}_2 = \mathcal{B}_1 \setminus \mathcal{E}_1$. The feasible region $\mathcal{B}_{m}$ for the placement of the $m$-th FR is given by 
\begin{equation}\label{eq:multi_FR_process}
\begin{aligned}
\mathcal{B}_{m} = \mathcal{B}_{m-1} \setminus \mathcal{E}_{m-1}, \quad \forall m \in \{2,3,\cdots,M\}.
\end{aligned}
\end{equation}
When $m>1$, $\mathcal{B}_{m}$ may consist of multiple disjoint subintervals. Thus, we need to search over each subinterval $\{\mathcal{B}_{m,j}\}_{j=1}^{J_m}$ within $\mathcal{B}_{m}$ via our proposed FSS method detailed in Algorithm~\ref{alg:alg1}, where $J_{m}$ denotes the number of disjoint subintervals contained in the $m$-th feasible region $\mathcal{B}_{m}$. After searching over the $j$-th subinterval $\mathcal{B}_{m,j}$ via Algorithm~\ref{alg:alg1}, locally optimal solutions $[x^*_{m,j}, \omega^*_{m,j}]^\top$ can be obtained. The optimal solution is selected from all candidate local optimal solutions as
\begin{equation}\label{eq:FSS_multi_get_best_x}
[x^\dagger_m, \omega^\dagger_m]^\top = \underset{[x,\omega]^\top \in \{ [x^*_{m,j}, \omega^*_{m,j}]^\top\}_{j=1}^{J_m} } {\arg\max} \tilde{R}_\mathrm{s}(x,\omega).
\end{equation}

To summarize, our proposed \ac{FSS} method for solving multi-FR deployment problem \eqref{eq:P1} is presented in Algorithm~\ref{alg:alg4}.

\begin{algorithm}[!htbp]
\caption{Proposed FSS Method for Solving Multi-FR Deployment Problem \eqref{eq:P1}}
\label{alg:alg4}
\begin{algorithmic}[1]
\State \textbf{Input}: $\mathcal{B}_1$, $\omega_{\min}$, $\omega_{\max}$, $L$, $M$, $\varepsilon$.
\For{$m = 1$ \textbf{to} $M$}
    \State \parbox[t]{\dimexpr\linewidth-\algorithmicindent\relax}{Split the disjoint interval $\mathcal{B}_m$ into $J_m$ continuous subintervals as $\{\mathcal{B}_{m,j}\}_{j=1}^{J_m}$.
    \label{step:use-alg2-for-multi-FR}}
    \For{$j = 1$ \textbf{to} $J_m$}
        \State \parbox[t]{\dimexpr\linewidth-\algorithmicindent-2em\relax}{       
        Obtain $[x^*_{m,j}, \omega^*_{m,j}]$ via Algorithm~\ref{alg:alg2} over $\mathcal{B}_{m,j}$.}
    \EndFor
    \State Obtain $[x^\dagger_m, \omega^\dagger_m]$ via \eqref{eq:FSS_multi_get_best_x}.
        \State \parbox[t]{\dimexpr\linewidth-\algorithmicindent\relax}{Calculate $\mathcal{E}_m = [x_{m,\mathrm{L}}, x_{m,\mathrm{R}}]$ via Lemma~\ref{lemma3} with $[x^\dagger_m, \omega^\dagger_m]^\top$.}\vspace{0.1em}
    \State Update $\mathcal{B}_{m+1}$ via \eqref{eq:multi_FR_process}.
\EndFor
\State \textbf{Output}: $\{[x_m^\dagger, \omega_m^\dagger]^\top\}_{m=1}^{M}$.
\end{algorithmic}
\end{algorithm}

\subsection{Proposed Low-Complexity \ac{CES} Method for Multi-FR Case}
For the multi-FR case, the \ac{CES} method can be also extended to obtain a suboptimal solution efficiently. 

Similar to the \ac{FSS} method, the \ac{CES} method for the multi-\ac{FR} case also deploys the \acp{FR} sequentially. After placing the $m$-th \ac{FR}, we compute the blocked interval $\mathcal{E}_m = [x_{m,\mathrm{L}}, x_{m,\mathrm{R}}]$ via Lemma~\ref{lemma3} and remove it from the feasible region for the next $(m+1)$-th \ac{FR}. The feasible region for the $(m+1)$-th \ac{FR} is defined as
\begin{equation}\label{eq:multi_A_CES}
\begin{aligned}
\mathcal{A}_{m+1} = \mathcal{B}_{m+1} \times \mathcal{W} , \quad \forall m \in \mathcal{M},
\end{aligned}
\end{equation}
where $\mathcal{B}_{m+1} = \mathcal{B}_{m} \setminus \mathcal{E}_m $. The penalized objective function of \ac{CES} for deploying the $m$-th FR is formulated as
\begin{equation}\label{eq:CES_objective_function_multi_h}
\begin{aligned}
\bar{h}_m({\mathbf{z}_m}) = R_{\mathrm{s}} -  \lambda_m [\bar{g}_m(\boldsymbol{z}_m)]^+,\quad \forall m \in \mathcal{M}, 
\end{aligned}
\end{equation}
where $\mathbf{z}_m = [x_m, \omega_m]^\top$, $\lambda_m$ is the penalty factor for the $m$-th \ac{FR}, and $\bar{g}_m(\boldsymbol{z}_m) = \boldsymbol{a}_{\mathrm{T},m}^\top \boldsymbol{n}_m\boldsymbol{n}_m^\top\boldsymbol{a}_{\mathrm{R},m}$.

To summarize, our proposed low-complexity \ac{CES} method for solving multi-FR deployment problem \eqref{eq:P1} is detailed in Algorithm~\ref{alg:alg5}.

\begin{algorithm}[!htbp]
\caption{Proposed CES Method for Solving Multi-FR Deployment Problem \eqref{eq:P1}}
\label{alg:alg5}
\begin{algorithmic}[1]
\State \textbf{Input}: $Q$, $L_{\mathrm{CE}}$, $K$, $T_{\mathrm{iter}}$, $\mathcal{A}_1$, $\sigma_{x,0}^2$, $\sigma_{\omega,0}^2$, $\Delta_x$, $M$.
\For{$m = 1$ \textbf{to} $M$}
    \State \parbox[t]{\dimexpr\linewidth-\algorithmicindent\relax}{Obtain $[x^\dagger_m, \omega^\dagger_m]^\top$ via Algorithm~\ref{alg:alg3} with sampling region $\mathcal{A}_m$.}
    \State \parbox[t]{\dimexpr\linewidth-\algorithmicindent\relax}{Calculate $\mathcal{E}_m = [x_{m,\mathrm{L}}, x_{m,\mathrm{R}}]$ via Lemma~\ref{lemma3} with $[x^\dagger_m, \omega^\dagger_m]^\top$.}\vspace{0.1em}
    \State Update $\mathcal{A}_{m+1}$ via \eqref{eq:multi_A_CES}.
\EndFor
\State \textbf{Output}: $\{[x_m^\dagger, \omega_m^\dagger]^\top\}_{m=1}^{M}$.
\end{algorithmic}
\end{algorithm}

\section{Simulation Results}\label{section:simulation}
In the simulations, unless otherwise specified, the transmit power is $P_\mathrm{T} = 30$ dBm, the noise power spectral density is $-174$ dBm/Hz, and the bandwidth is $B = 1$ MHz. \ac{Tx}, \ac{Rx}, and \ac{Eve} are located at $[0,-50,0]^\top$ m, $[100,-50,0]^\top$ m, and $[110,-50,0]^\top$ m, respectively. The carrier frequency is $2.4$ GHz. The \ac{FR} sizes are $L_1 = 1.25$ m and $L_2 = 0.65$ m. The minimal distance between FRs is $d_{\min} = 0.625$ m. The feasible search intervals of the \ac{FR} positions is $[x_{\min}, x_{\max}] = [0, 100]$ m and the rotation angles is $[\omega_{\min}, \omega_{\max}] = [-45^\circ, 45^\circ]$. For the proposed \ac{FSS} method, the convergence threshold is $\varepsilon = 0.001$. For the proposed \ac{CES} method, the number of elite samples is $K = 100$. The half-width of the local refinement interval is $\Delta_x = 2$ m. For multi-\ac{FR} scenarios, the number of \acp{FR} is $M = 10$. All the results are obtained by averaging over $1000$ independent Monte Carlo simulations.

\begin{figure}[!htbp]
\centering
\includegraphics[width=\linewidth]{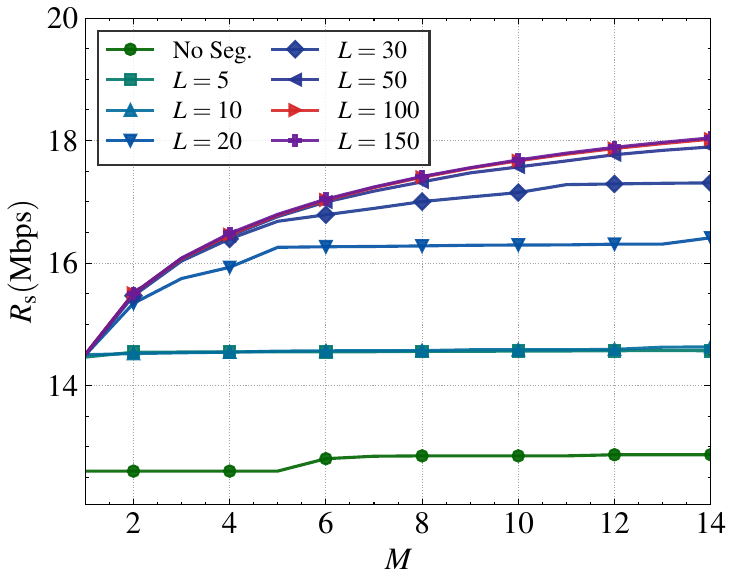}
\caption{Secrecy rate, $R_{\mathrm{s}}$, versus the number of FRs, $M$, where the number of interval partitions in our proposed \ac{FSS} method is $L=5,10,20,30,50,100,150$.}
\label{fig:FS-multi-different-L}
\end{figure}
Fig.~\ref{fig:FS-multi-different-L} presents the secrecy rate, $R_{\mathrm{s}}$, versus the number of \acp{FR}, $M$, under different numbers of interval partitions, $L$, in our proposed \ac{FSS} method. In the legend, ``No Seg.'' denotes the direct use of the Fibonacci search method over the entire domain without subintervals segmentation. As observed in Fig.~\ref{fig:FS-multi-different-L}, a small $L$ fails to sufficiently search for the global optimum among multiple local optima. As $L$ increases, the performance gap gradually narrows. In particular, when $L \ge 100$, the performance improvement becomes negligible which indicates further increasing $L$ only introduces extra computational overhead without bringing substantial performance improvement. Therefore, we set $L = 100$ in the following simulations.

\begin{figure}[!htbp]
\centering
\includegraphics[width=\linewidth]{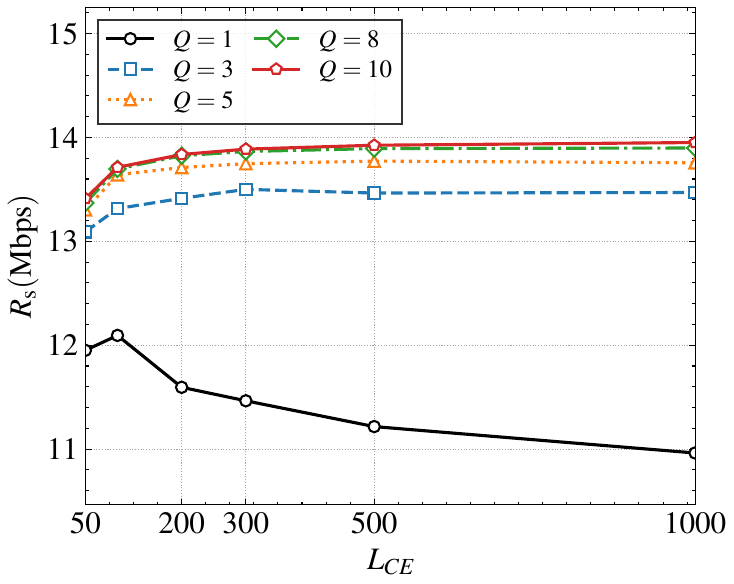}
\caption{Secrecy rate, $R_{\mathrm{s}}$, versus the number of samples, $L_{\mathrm{CE}}$, in our proposed \ac{CES} method, where the number of \ac{GMM} components is $Q=1,3,5,8,10$.}
\label{fig:CES-GMM-different-Q-LCE}
\end{figure}
In Figs.~\ref{fig:CES-GMM-different-Q-LCE} and~\ref{fig:CES-GMM-different-Q-P}, the impact of hyperparameter selection on the performance of our proposed \ac{CES} method is systematically evaluated. Fig.~\ref{fig:CES-GMM-different-Q-LCE} illustrates the secrecy rate, $R_{\mathrm{s}}$, versus the number of samples, $L_{\mathrm{CE}}$, in our proposed \ac{CES} method under different numbers of \ac{GMM} components, $Q$. 
Notably, for the degenerate case of $Q=1$, the \ac{GMM} reduces to a single Gaussian distribution, which is incapable of capturing the multimodal structure of the secrecy rate function. In this case, increasing $L_{\mathrm{CE}}$ with a fixed elite size $K$ merely concentrates the sampling distribution more aggressively around an already selected local optimum, thereby accelerating premature convergence and even degrading the achievable secrecy rate $R_{\mathrm{s}}$. This observation highlights the necessity of the multi-component \ac{GMM} adopted in our proposed \ac{CES} method.
It can be also observed that both the stability and the achieved secrecy rate $R_{\mathrm{s}}$ improve significantly as $Q$ ($Q \geq 3$) and $L_{\mathrm{CE}}$ increase. However, no further performance gain can be observed when $Q>8$, while the total computational complexity increases. Therefore, we set $Q=8$ for the following simulations. As for $L_{\mathrm{CE}}$, the secrecy rate increases as $L_{\mathrm{CE}}$ becomes larger, and gradually saturates when $L_{\mathrm{CE}} \geq 500$. Therefore, we set $L_{\mathrm{CE}}=500$ in the following simulations.

\begin{figure}[!htbp]
\centering
\includegraphics[width=\linewidth]{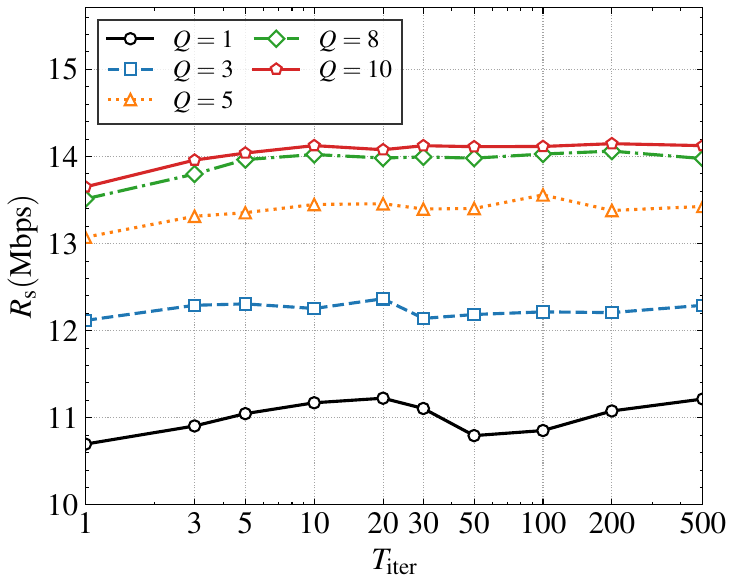}
\caption{Secrecy rate, $R_{\mathrm{s}}$, versus the number of iterations, $T_{\mathrm{iter}}$, in our proposed \ac{CES} \ method, where the number of \ac{GMM} components is $Q=1,3,5,8,10$.}
\label{fig:CES-GMM-different-Q-P}
\end{figure}
Fig.~\ref{fig:CES-GMM-different-Q-P} depicts the secrecy rate, $R_{\mathrm{s}}$, versus the number of iterations, $T_{\mathrm{iter}}$, in our proposed \ac{CES} method under different numbers of \ac{GMM} components, $Q$. Consistent with Fig.~\ref{fig:CES-GMM-different-Q-LCE}, increasing $Q$ yields a higher secrecy rate $R_{\mathrm{s}}$. Moreover, enlarging $T_{\mathrm{iter}}$ also contributes to performance improvement. $R_{\mathrm{s}}$ grows noticeably with $T_{\mathrm{iter}}$ when $T_{\mathrm{iter}}<100$, whereas the secrecy rate barely changes for $T_{\mathrm{iter}}>100$. Therefore, we set $T_{\mathrm{iter}}=100$ for the following simulations.

\begin{figure}[!htbp]
\centering
\begin{subfigure}{\linewidth}
\centering
\includegraphics[width=\linewidth]{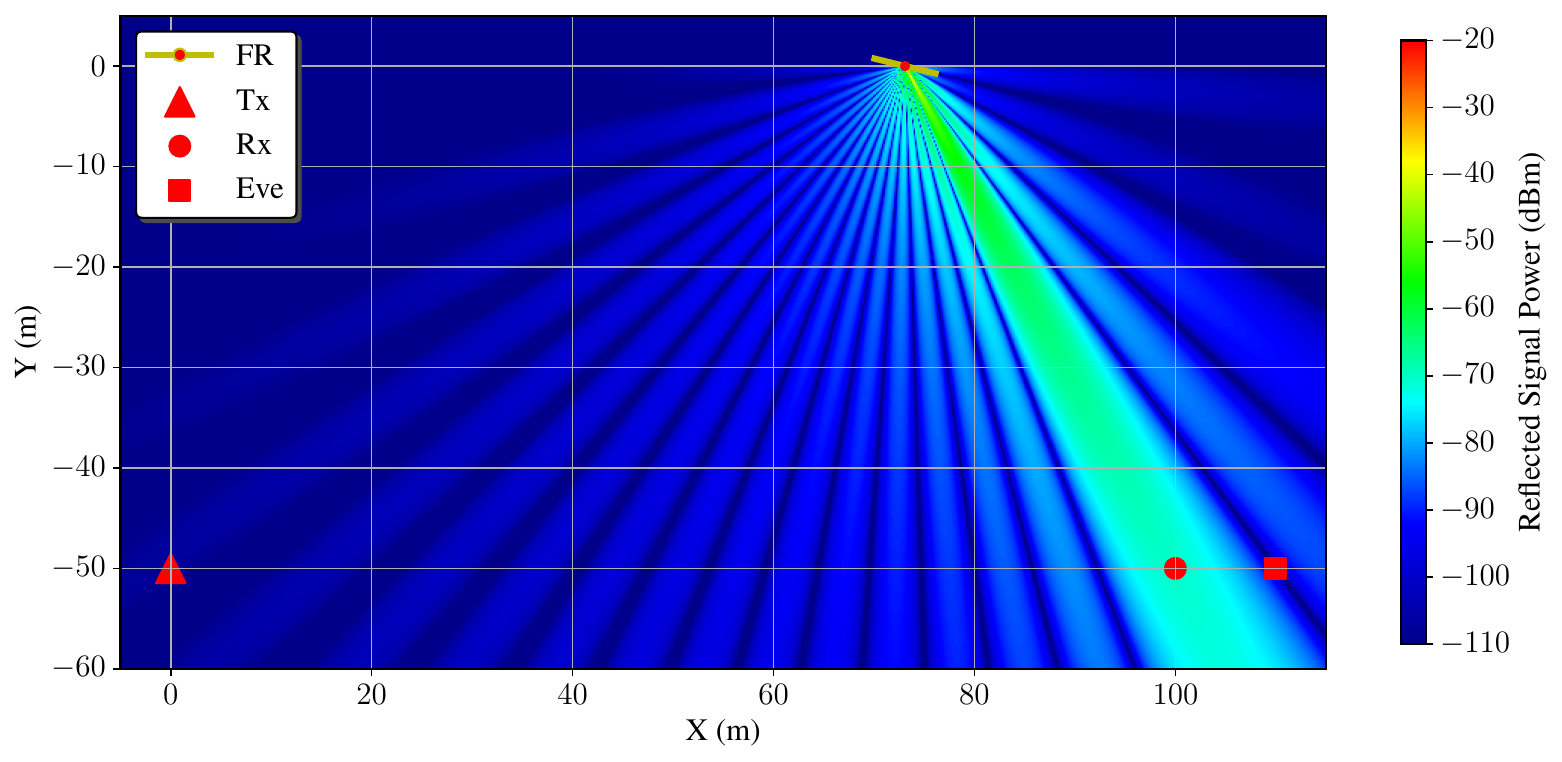}
\caption{Single-\ac{FR}}
\label{fig:sub2:a}
\end{subfigure}

\vspace{4mm} %

\begin{subfigure}{\linewidth}
\centering
\includegraphics[width=\linewidth]{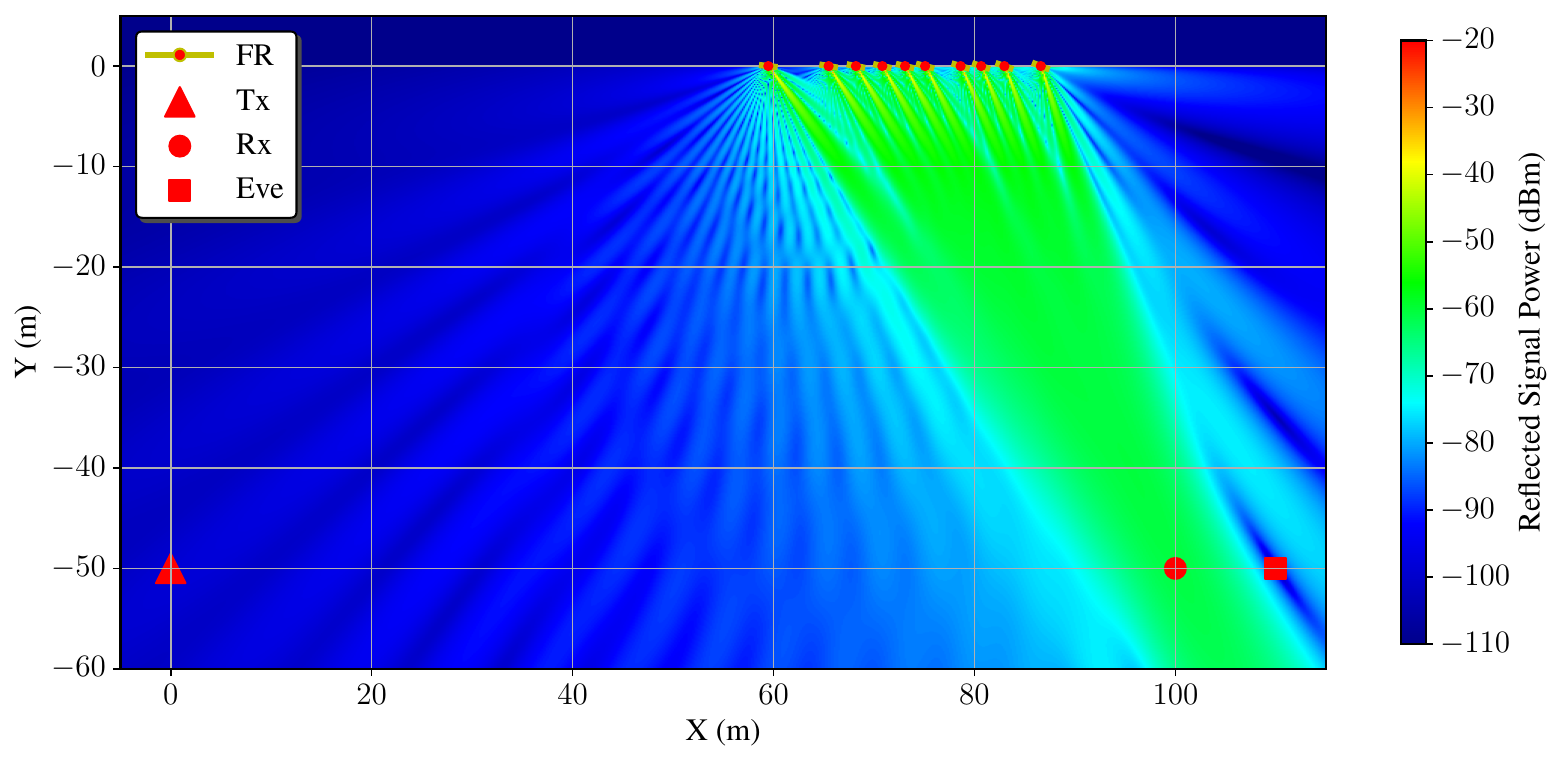}
\caption{Multi-\ac{FR} ($M= 10$)}
\label{fig:sub2:b}
\end{subfigure}
\caption{Reflected power distribution of the FR-aided secure wireless communication system.}
\label{fig:signal-distribution}
\end{figure}

Fig.~\ref{fig:sub2:a} depicts the reflected power distribution after single-\ac{FR} reflection using our proposed \ac{FSS} method. The optimized \ac{FR} steers the high-power main beam toward the legitimate receiver \ac{Rx}, while the eavesdropper \ac{Eve} is situated within the low-power region between adjacent beams. This observation means that our proposed scheme can enhance the received signal power at the legitimate link and suppress the signal strength intercepted by \ac{Eve} simultaneously. Fig.~\ref{fig:sub2:b} presents the reflected power distribution after multi-\ac{FR} deployment with $M = 10$. Consistent with the single-\ac{FR} case, the deployed FRs make \ac{Eve} locate in a broader low-power region, which can enhance the robustness of multi-\ac{FR} deployment against the position uncertainty of \ac{Eve}.

\begin{figure*}[t]
\centering
\begin{subfigure}[t]{0.3\linewidth}
    \centering
    \includegraphics[width=\linewidth]{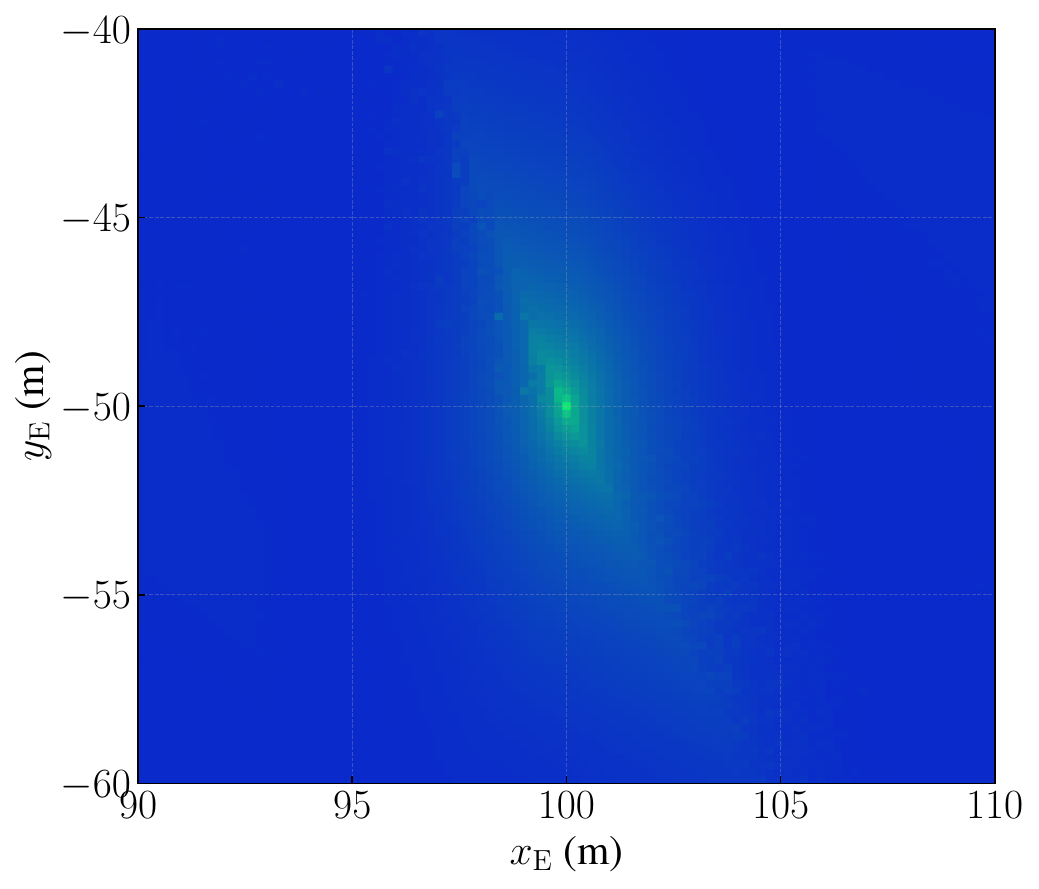}
    \caption{Optimized position and rotation}
    \label{subfig:Different-optimization-variables-a}
\end{subfigure}
\hfill
\begin{subfigure}[t]{0.3\linewidth}
    \centering
    \includegraphics[width=\linewidth]{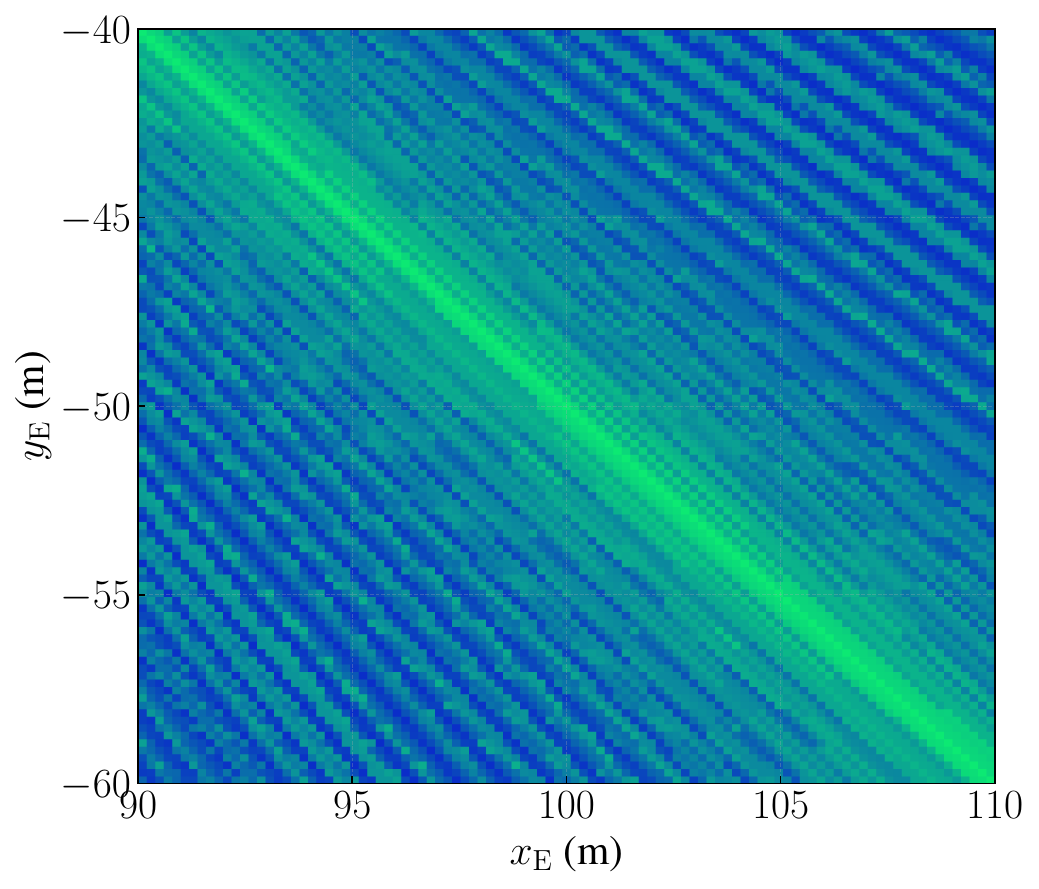}
    \caption{Fixed position and optimized rotation}
    \label{subfig:Different-optimization-variables-b}
\end{subfigure}
\hfill
\begin{subfigure}[t]{0.3\linewidth}
    \centering
    \includegraphics[width=\linewidth]{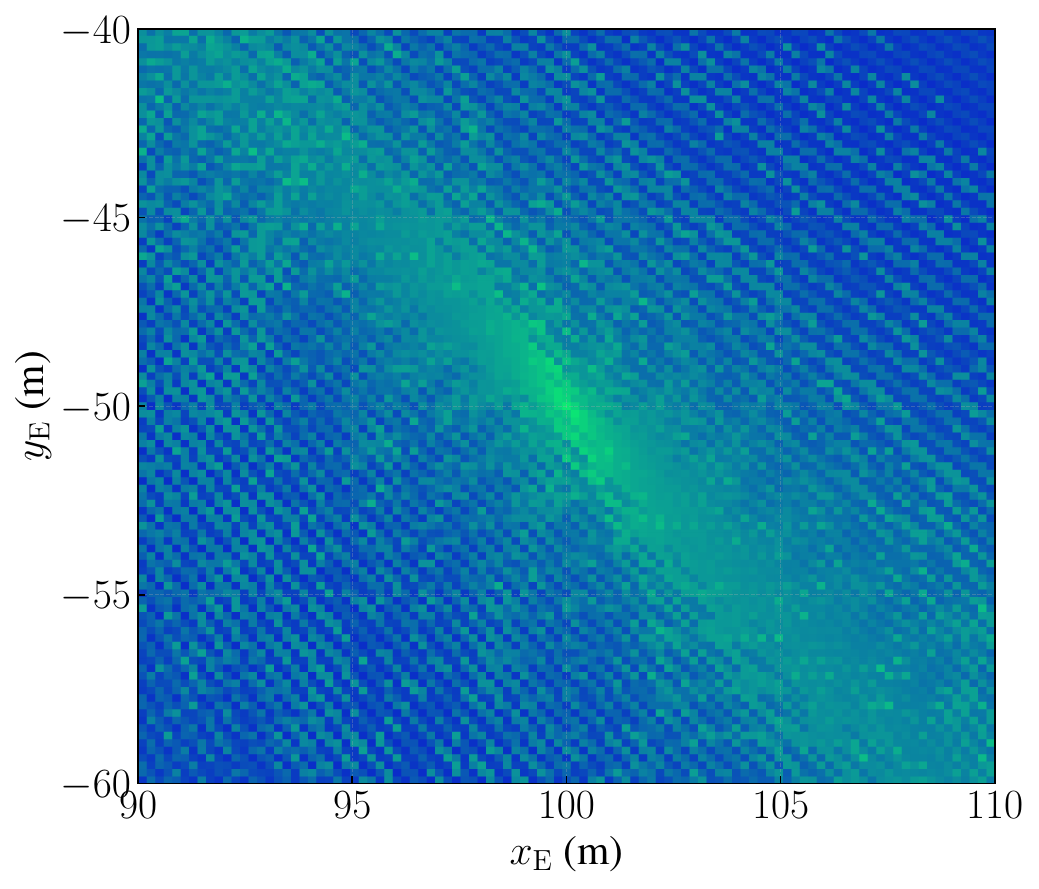}
    \caption{Fixed rotation and optimized position}
    \label{subfig:Different-optimization-variables-c}
\end{subfigure}
\hfill
\begin{subfigure}[t]{0.08\linewidth}
    \centering
    \includegraphics[width=\linewidth]{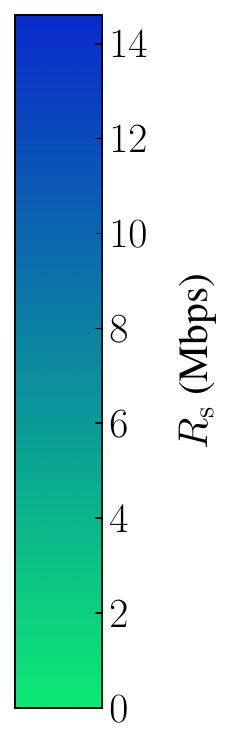}
\end{subfigure}
\caption{Secrecy rate of different deployment schemes for the single-\ac{FR}-aided secure wireless communication system, where the position of Eve, $\boldsymbol{p}_{\mathrm{E}} = [x_\mathrm{E}, y_\mathrm{E},0]^\top$, is varying.}
\label{fig:20-20compare}
\end{figure*}

Fig.~\ref{fig:20-20compare} illustrates the secrecy rate of different deployment schemes for the single-\ac{FR}-aided secure wireless communication system, where the position of Eve is varying.  As shown in Fig.~\ref{subfig:Different-optimization-variables-a}, joint optimization of the position $x_1$ and rotation $\omega_1$ yields the most favorable and spatially robust $R_\mathrm{s}$ distribution. The secrecy rate degrades only when \ac{Eve} lies close to the legitimate receiver \ac{Rx}. In contrast, Fig.~\ref{subfig:Different-optimization-variables-b} fixes the \ac{FR} position at $x_1=50$ and only optimizes the rotation $\omega_1$, resulting in a severely degraded $R_\mathrm{s}$ when \ac{Eve} lies on the line connecting the \ac{FR} and \ac{Rx}. Similarly, Fig.~\ref{subfig:Different-optimization-variables-c} fixes the rotation $\omega=0^{\degree}$ and only optimizes the position $x_1$, preventing the main beam from being steered toward \ac{Rx} in most configurations and thereby yielding a low $R_\mathrm{s}$. These findings confirm that joint position and rotation control is essential to fully leverage the \ac{FR} architecture.

\begin{figure}[!t]
\centering
\includegraphics[width=\linewidth]{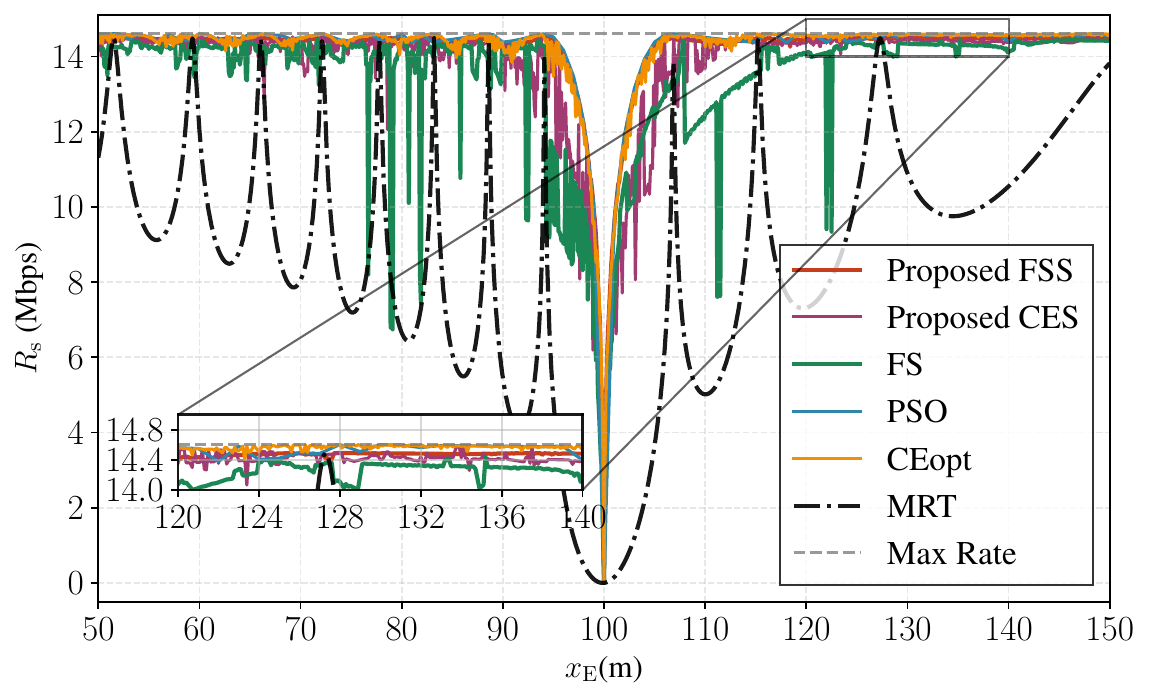}
\caption{Secrecy rate, $R_{\mathrm{s}}$, versus the $x$-coordinates of \ac{Eve}, $x_\mathrm{E}$; performance comparisons of different schemes for the single-\ac{FR}-aided secure wireless communication system.}
\label{fig:x-move-rate}
\end{figure}
Fig.~\ref{fig:x-move-rate} and Table~\ref{tab:time_analysis} present the secrecy rate and runtime comparisons of different schemes for the single-\ac{FR} case, where the $x$ coordinate of \ac{Eve}, $x_\mathrm{E}$, vary from $50$ m to $150$ m. In the legend,  ``FS'' denotes the one-dimensional Fibonacci search method without the subinterval segmentations. ``PSO'' denotes the meta-heuristic particle swarm optimization~\cite{kennedy1995particle}. ``CEopt'' denotes the cross-entropy method for optimization~\cite{rubenstein2004crossentropy}. ``MRT'' denotes the maximum ratio transmission~\cite{lu2025wireless}, which optimizes the positions and angles of the FRs only to maximize the received power at \ac{Rx}, without accounting for \ac{Eve}. ``Max Rate'' indicates the maximum $R_{\mathrm{s}}$ when no eavesdropper is present. It can be observed that our proposed \ac{FSS}, ``PSO'', and ``CEopt'' can achieve near-optimal $R_{\mathrm{s}}$ in single-FR case. The result of ``MRT'' indicates that only enhancing the legitimate received power cannot guarantee communication secrecy, as the achievable secrecy rate depends critically on whether the eavesdropper Eve locates within the beam peak or low-power region. The result of ``FS'' indicates that due to the nonconvex and multimodal nature of the secrecy rate, the direct use of the Fibonacci search method over the entire domain without subintervals segmentation may get trapped in suboptimal solutions and exhibit degraded performance. From Table \ref{tab:time_analysis}, it can be observed that ``PSO'' is the slowest, with an average runtime of $41.82$ s. Moreover, the variance of the runtime is also very large, indicating that ``PSO'' is unstable. By contrast, our proposed CES method drastically reduces the average runtime to merely $0.33$ s and can achieve satisfactory performance in most cases.

\begin{table}[!t]
\renewcommand{\arraystretch}{1.3}
\caption{Runtime Comparisons of Different Schemes for Single-\ac{FR} Case}
\label{tab:time_analysis}
\centering
\begin{tabular}{lccc}
\toprule
\textbf{Method} & \textbf{Mean $\pm$ Std. (s)} & \textbf{Min (s)} & \textbf{Max (s)} \\
\midrule
Proposed FSS          & 7.39 $\pm$ 0.754       & 0.51  & 13.68  \\
Proposed CES         & 0.33 $\pm$ 0.038    & 0.26  & 0.5    \\
FS & 0.031 $\pm$ 0.003  & 0.027 & 0.063  \\
PSO                  & 41.82 $\pm$ 29.56      & 22.08  & 95.99  \\
CEopt                & 2.55 $\pm$ 0.695     & 1.77  & 5.28   \\
\bottomrule
\end{tabular}
\end{table}

\begin{figure}[!htbp]
\centering
\begin{subfigure}{\linewidth}
    \centering
    \includegraphics[width=\linewidth]{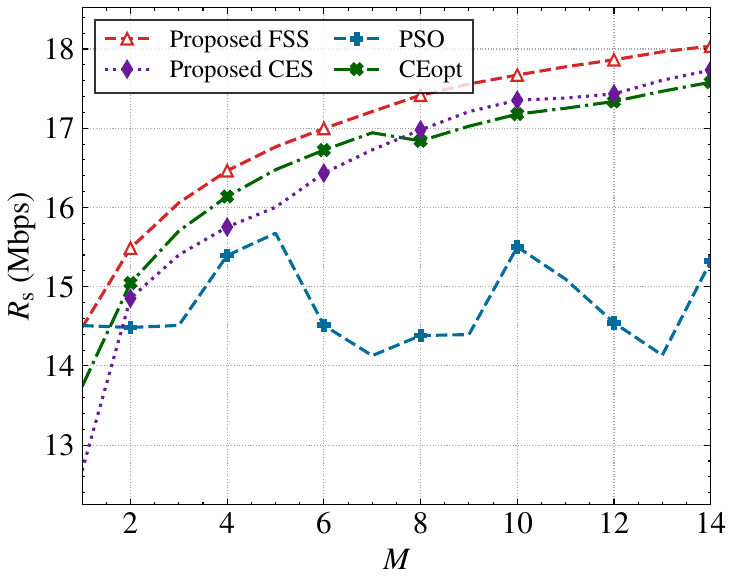}
    \caption{Different optimization methods}
    \label{fig:change_number_method}
\end{subfigure}
\begin{subfigure}{\linewidth}
    \centering
    \includegraphics[width=\linewidth]{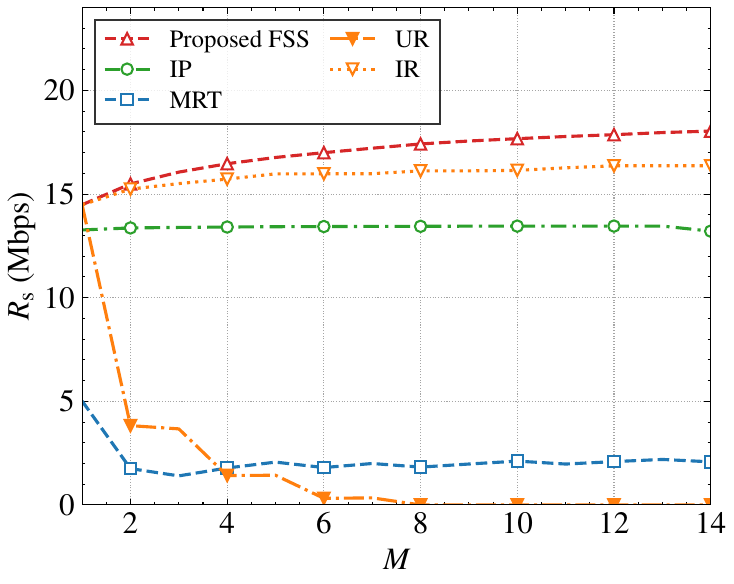}
    \caption{Different deployment schemes}
    \label{fig:change_number_layout}
\end{subfigure}
\caption{Secrecy rate, $R_\mathrm{s}$, versus the number of \ac{FR}, $M$; performance comparisons of different optimization methods and deployment schemes.}
\label{fig:change_number}
\end{figure}

Fig.~\ref{fig:change_number} and Table~\ref{tab:multi_time_analysis} show the secrecy rate and runtime comparisons of different schemes under different numbers of \acp{FR}, $M$. From Fig.~\ref{fig:change_number_method}, it can be observed that our proposed \ac{FSS} method consistently achieves the best performance among all the schemes. The proposed \ac{CES} method achieves comparable performance to ``CEopt''. The performance of ``PSO'' degrades drastically in the multi-FR case, especially when $M$ is large. Table \ref{tab:multi_time_analysis} demonstrates that our proposed \ac{CES} method maintains consistently low computational overhead. In contrast, the runtimes of ``PSO'' and ``CEopt'' grow remarkably with the increase of $M$, indicating the low computational efficiency of these two methods in large-scale optimization. In the legend of Fig.~\ref{fig:change_number_layout}, ``UR'' denotes the scheme that all the FRs use the unified optimized rotation angle with positions fixed. ``IR'' denotes the scheme that all the FRs use the independently optimized rotation angle with positions fixed. ``IP'' denotes the scheme that all the FRs use the independently optimized position with rotations fixed. It can be observed that the maximum achievable secrecy rate is obtained when jointly optimizing the position and rotation. For the ``IR'' scheme, the insufficient \acp{DoF} lead to a lower secrecy rate compared with the fully optimized scheme. For the ``UR'' scheme, the secrecy rate continuously declines as $M$ increases and even approaches zero, since more signal power is leaked to \ac{Eve} due to insufficient control of reflections. For the ``IP'' scheme, since only a single position along the $x$-axis can align the main signal beam towards \ac{Rx}, any additional reflecting surfaces placed at other positions will deliver more signal power to \ac{Eve}, thereby reducing $R_\mathrm{s}$. Moreover, as more \ac{FR}s are deployed, the available space along the $x$-axis becomes limited. Extra \acp{FR} inevitably leak more signal power to \ac{Eve}, resulting in further secrecy rate performance degradation.

\begin{table}[!t]
\renewcommand{\arraystretch}{1.3}
\caption{Runtime Comparisons of Different Schemes for Multi-FR Case}
\label{tab:multi_time_analysis}
\centering
\begin{tabular}{lcccc}
\toprule
\textbf{M} & \textbf{FSS (s)}&\textbf{CES (s)}&\textbf{PSO (s)}&\textbf{CEopt (s)} \\
\midrule
2 & 14.37$\pm$0.14   & 0.38$\pm$0.03  & 178.24$\pm$80.77& 2.90$\pm$0.45  \\
4 & 45.40$\pm$0.37  & 0.73$\pm$0.06  & 507.20$\pm$157.67& 5.51$\pm$0.62 \\
6 & 95.35$\pm$0.49 & 1.07$\pm$0.06  & 674.83$\pm$214.34& 8.00$\pm$0.78 \\
8 & 161.67$\pm$0.67 & 1.42$\pm$0.09  & 900.64$\pm$333.00& 10.55$\pm$0.79 \\
10 & 241.02$\pm$0.86 & 1.79$\pm$0.12  & 1041.66$\pm$219.84& 13.28$\pm$1.11 \\
12 & 327.40$\pm$0.99 & 2.14$\pm$0.16  & 1051.46$\pm$313.61& 15.71$\pm$1.29 \\
14 & 421.81$\pm$1.34 & 2.51$\pm$0.19  & 768.86$\pm$281.54& 17.96$\pm$1.18 \\
\bottomrule
\end{tabular}
\end{table}

\begin{figure}[!htbp]
\centering
\begin{subfigure}{\linewidth}
    \centering
    \includegraphics[width=\linewidth]{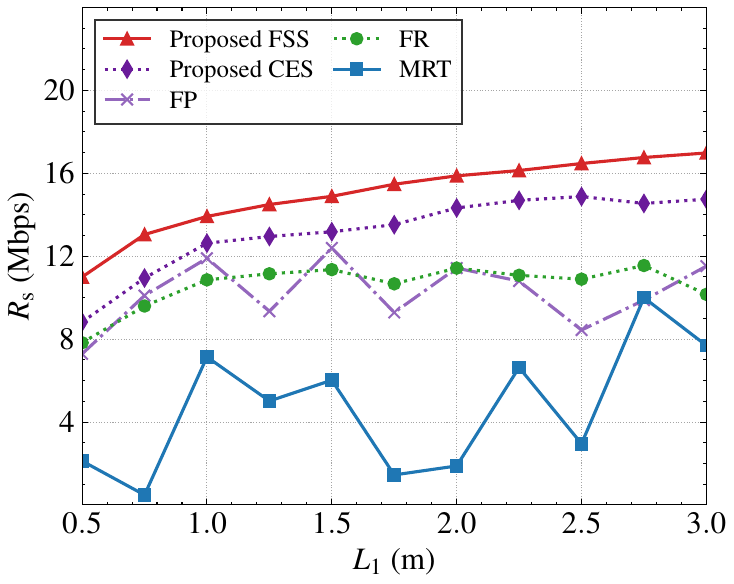}
    \caption{Single-FR}
    \label{fig:change_L1_m1}
\end{subfigure}

\begin{subfigure}{\linewidth}
    \centering
    \includegraphics[width=\linewidth]{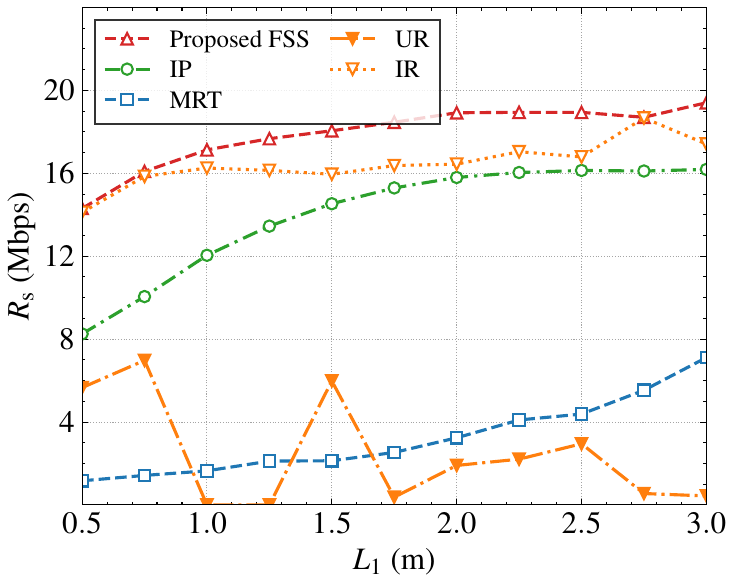}
    \caption{Multi-FR ($M = 10$)}
    \label{fig:change_L1_m10}
\end{subfigure}
\caption{Secrecy rate, $R_\mathrm{s}$, versus the length of FRs, $L_1$; performance comparisons of different deployment schemes for single-\ac{FR} and multi-\ac{FR} cases.}
\label{fig:change_L1}
\end{figure}

Fig.~\ref{fig:change_L1} shows the secrecy rate comparisons of different deployment schemes for single-\ac{FR} and multi-\ac{FR} cases with different lengths of FRs. For the single-\ac{FR} case in Fig.~\ref{fig:change_L1_m1}, in the legend, ``FR'' denotes the scheme using an optimized position with the rotation angle fixed as $\omega_1 = 0\degree$, ``FP'' denotes the scheme using an optimized rotation angle with the position fixed at $x_1 = 50$ m. It can be observed that $R_{\mathrm{s}}$ rises with increasing $L_1$, since a larger \ac{FR} aperture concentrates more signal power around \ac{Rx} and correspondingly reduces the power intercepted by \ac{Eve}. Moreover, it can be also observed that the secrecy rate of the scheme with joint optimization of position and rotation is much higher than the schemes with only position or rotation optimization, which indicates the necessity of joint optimization of both position and rotation. For the multi-FR case in Fig.~\ref{fig:change_L1_m10}, $R_{\mathrm{s}}$ also rises with increasing $L_1$, while the growth rate gradually diminishes until an upper bound is reached. The scheme with joint optimization of position and rotation achieves the highest secrecy rate, while the schemes with only position or rotation optimization yield comparable secrecy rates. This observation indicates that in multi-FR cases, the joint optimization of both position and rotation is not as critical as in single-FR cases, since the additional \acp{DoF} provided by multiple FRs can compensate for the lack of optimization in either position or rotation, especially when the apertures of FRs are large.

\begin{figure}[!htbp]
\centering
\includegraphics[width=\linewidth]{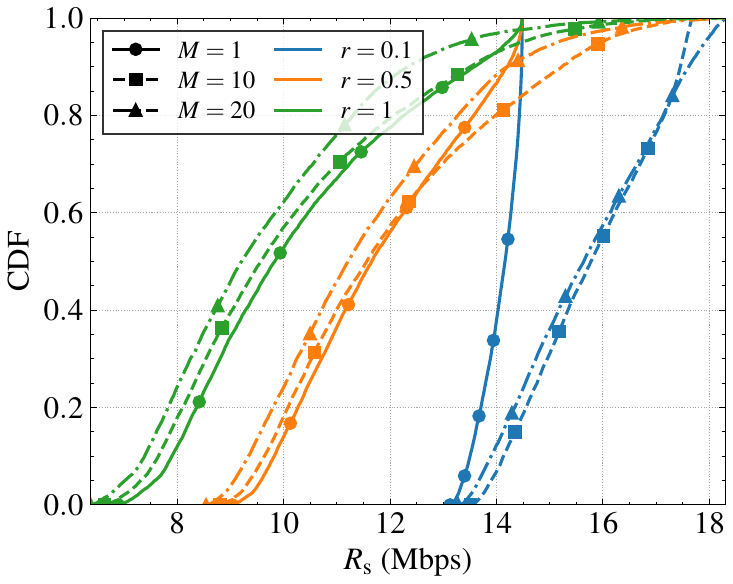}
\caption{CDF of the secrecy rate, $R_\mathrm{s}$, under different numbers of FRs, $M$, and position uncertainty radii of Eve, $r$.}
\label{fig:cdf-compare}
\end{figure}
Fig.~\ref{fig:cdf-compare} illustrates the \ac{CDF} of the secrecy rate under varying numbers of FRs and position uncertainty radii of Eve, where the true position of Eve lies within a circle of radius $r$ centered at $\boldsymbol{p}_{\mathrm{E}}$. It can be observed that $R_\mathrm{s}$ generally improves with an increasing $M$, which is consistent with the results in Fig.~\ref{fig:change_number}. Conversely, $R_\mathrm{s}$ degrades as $r$ increases. This is expected since a larger uncertainty region expands the spatial range of potential Eve positions, thereby increasing the likelihood of Eve being near the main beam. Notably, the impact of adding more FRs depends heavily on the estimation accuracy. When the uncertainty radius is small (e.g., $r=0.1$), increasing $M$ effectively improves the secrecy rate, as the optimized deployment shapes a narrow low-power zone that confines Eve. In contrast, when the uncertainty radius is large (e.g., $r=1$), deploying more \acp{FR} paradoxically degrades performance. This occurs since additional FRs introduce extra beam peaks, and a larger $r$ increases the probability that Eve falls into one of these high-gain regions, leading to a severe signal leakage.

\section{Conclusion}\label{section:conclusion}
This paper investigated the joint position and rotation optimization for \ac{FR}-aided secure wireless communication systems. For the single-\ac{FR} case, we formulated a secrecy rate maximization problem and proposed the \ac{FSS} method, which combines interval partitioning with Fibonacci search. To reduce the computational overhead, we also developed a low-complexity \ac{CES} method. This method uses \ac{GMM}-based adaptive importance sampling to identify promising intervals and then performs Fibonacci search within these intervals. For the multi-\ac{FR} case, after deriving the blocked placement interval incurred by the deployed \acp{FR}, we extended the \ac{FSS} and \ac{CES} methods by deploying the \acp{FR} sequentially. Simulation results showed that the proposed \ac{FSS} method achieves significant secrecy performance gains over all the considered benchmark schemes, whereas the proposed \ac{CES} method provides a more favorable runtime-performance tradeoff. 

The spatial controllability of \acp{FR} also enables two security-related future works. By steering the reflected signal into a deep low-power region around \ac{Eve}, \ac{FR} deployment can support covert communication, where \ac{Eve} cannot detect whether the communication is ongoing. Moreover, the mechanically movable \ac{FR} provides a controllable source of channel randomness. Exploiting channel reciprocity, the Tx and the Rx can obtain shared physical layer secret keys from the \ac{FR}-induced channel variations, whereas \ac{Eve} can only observe uncorrelated measurements.

\appendices   
\section{Proof of Lemma~\ref{lemma2}} 
\label{proof of the lemma 2}
\begin{figure}[!htbp]
\centering
\includegraphics[width=\linewidth]{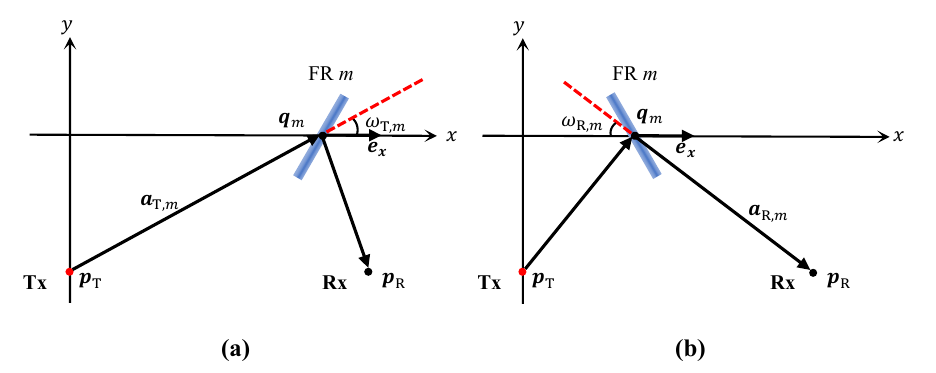}
\caption{Geometric illustration for Tx and Rx lie on different sides of each FR panel.}
\label{fig:lemma2}
\phantomsubcaption\label{fig:lemma2a}
\phantomsubcaption\label{fig:lemma2b}
\end{figure}
Since constraint \eqref{subeq:P3_reflection} enforces that the legitimate receiver \ac{Rx} and the transmitter \ac{Tx} lie on the same side of each \ac{FR}, we analyze two cases separately: $\omega_m > 0$ and $\omega_m < 0$.

For an \ac{FR} with $\omega_m > 0$, we consider the critical rotation angle $\omega_{\mathrm{T},m} \triangleq \arccos({\boldsymbol{a}_{\mathrm{T},m}^\top \boldsymbol{e}_x}/{d_{\mathrm{T},m}})$, defined as the angle between the incident vector $\boldsymbol{a}_{\mathrm{T},m}$ and the $x$-axis, which corresponds to the scenario where the \ac{FR} panel is parallel to $\boldsymbol{a}_{\mathrm{T},m}$. As illustrated in Fig.~\ref{fig:lemma2a}, when $\omega_m > \omega_{\mathrm{T},m} $, \ac{Tx} and \ac{Rx} will no longer reside on the same side of the \ac{FR}.

Similarly, for an \ac{FR} with $\omega_m < 0$, we consider the critical rotation angle $\omega_{\mathrm{R},m} \triangleq \arccos({\boldsymbol{a}_{\mathrm{R},m}^\top \boldsymbol{e}_x}/{d_{\mathrm{R},m}})$ corresponding to the scenario where the \ac{FR} panel is parallel to the reflected vector $\boldsymbol{a}_{\mathrm{R},m}$. As illustrated in Fig.~\ref{fig:lemma2b}, when $\omega_m < -\omega_{\mathrm{R},m} $, \ac{Tx} and \ac{Rx} will no longer reside on the same side of the \ac{FR}.

Therefore, when constraint \eqref{subeq:P3_reflection} holds, the feasible range of the corresponding rotation angle $\omega$ is given by
\begin{equation}\label{eq:omega_range_T,R}
-\omega_{\mathrm{R},m} \leq \omega \leq \omega_{\mathrm{T},m}.
\end{equation}

\section{Proof of Lemma~\ref{lemma3}} 
\label{proof of the lemma 3}

\begin{figure}[!htbp]
\centering
\includegraphics[width=\linewidth]{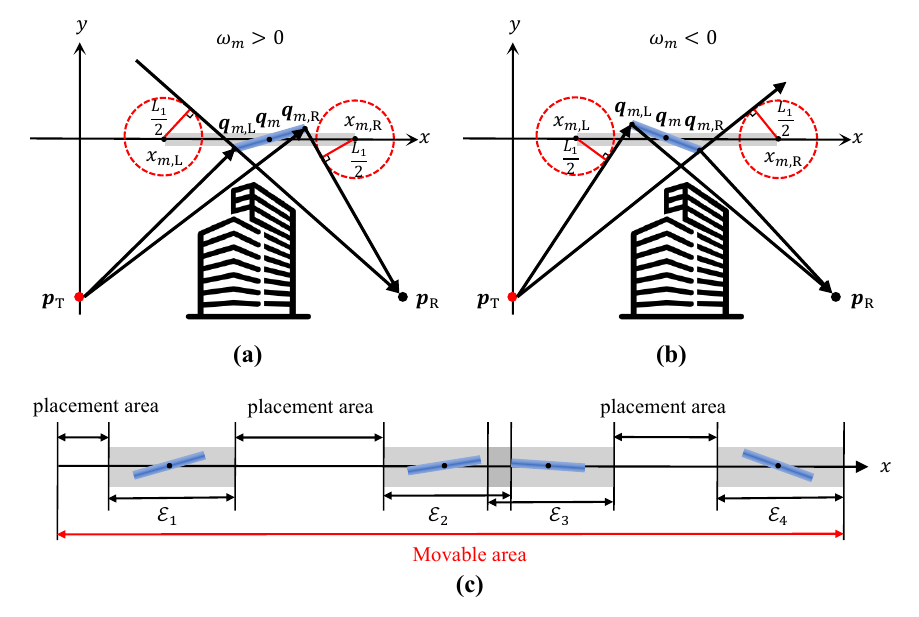}
\caption{Blocked placement interval induced by a deployed FR in the multi-FR case.}
\label{fig:lemma3}
\phantomsubcaption\label{fig:lemma3a}
\phantomsubcaption\label{fig:lemma3b}
\end{figure}

For a line $\ell:\ \boldsymbol{n}^\top \boldsymbol{x} + c_0 = 0$, which passes through two points $\boldsymbol{p} = [p_x,\, p_y,0]^\top$ and $\boldsymbol{q} = [q_x,\, q_y,0]^\top$ with $\boldsymbol{x} = [x,\, y, \, 0]^\top$, the normal vector and the line constant are given by
\begin{equation*}
\boldsymbol{n} = \begin{bmatrix} n_x \\ n_y \\ 0 \end{bmatrix} = \begin{bmatrix} q_y - p_y \\ p_x - q_x \\ 0 \end{bmatrix}, \qquad c_0 = q_x p_y - p_x q_y.
\end{equation*}
The distance from a point on the $x$-axis, $[x,0,0]^\top$, to $\ell$ is
\begin{equation}\label{eq:signed_distance}
\operatorname{dist}\bigl(([x,0,0]^\top),\,\ell\bigr) = \frac{|n_x x + c_0|}{\|\boldsymbol{n}\|}.
\end{equation}
Setting $\operatorname{dist} = L_1/2$ and solving for $x$ yields the two intersections of the $x$-axis with the pair of offset lines at distance $L_1/2$ from $\ell$, i.e.,
\begin{equation}\label{eq:general_boundary}
x = \frac{\pm \dfrac{L_1}{2}\|\boldsymbol{n}\| - c_0}{n_x}.
\end{equation}
The two roots are the left and right boundary candidates on the $x$-axis.

As illustrated in Fig.~\ref{fig:lemma3a}, for the case $\omega_m > 0$, the left-boundary line passes through the left \ac{FR} endpoint $\boldsymbol{q}_{m,\mathrm{L}}$ and the receiver $\boldsymbol{p}_{\mathrm{R}}$, and the right-boundary line passes through the right \ac{FR} endpoint $\boldsymbol{q}_{m,\mathrm{R}}$ and the receiver $\boldsymbol{p}_{\mathrm{R}}$. Applying \eqref{eq:general_boundary} and combining the minimum distance requirement $d_{\min}$, we obtain 
\begin{align*}
x_{m,\mathrm{L}} &= \min\!\left\{ \frac{\pm\frac{L_1}{2}\| \boldsymbol{n}_{\mathrm{L}}\| - c_{0,\mathrm{L}}}{n_{\mathrm{L},x}}, x_m-d_{\min} \right\},\\
x_{m,\mathrm{R}} &= \max\!\left\{ \frac{\pm\frac{L_1}{2}\| \boldsymbol{n}_{\mathrm{R}}\| - c_{0,\mathrm{R}}}{n_{\mathrm{R},x}}, x_m+d_{\min} \right\},
\end{align*}
where $\boldsymbol{n}_{\mathrm{L}} = \begin{bmatrix} y_{m,\mathrm{L}} - y_{\mathrm{R}} \\ x_{\mathrm{R}} - x_{m,\mathrm{L}} \end{bmatrix}$, $\boldsymbol{n}_{\mathrm{R}} = \begin{bmatrix} y_{m,\mathrm{R}} - y_{\mathrm{R}} \\ x_{\mathrm{R}} - x_{m,\mathrm{R}} \end{bmatrix}$, $c_{0,\mathrm{L}} =  x_{m,\mathrm{L}} y_{\mathrm{R}}-x_{\mathrm{R}} y_{m,\mathrm{L}} $, $c_{0,\mathrm{R}} =  x_{m,\mathrm{R}} y_{\mathrm{R}}-x_{\mathrm{R}} y_{m,\mathrm{R}} $, $n_{\mathrm{L},x} = y_{m,\mathrm{L}} - y_{\mathrm{R}}$, and $n_{\mathrm{R},x} = y_{m,\mathrm{R}} - y_{\mathrm{R}}$.

Similarly, as illustrated in Fig.~\ref{fig:lemma3b}, for the case $\omega_m < 0$, the left-boundary line passes through the left \ac{FR} endpoint $\boldsymbol{q}_{m,\mathrm{L}}$ and the transmitter $\boldsymbol{p}_{\mathrm{T}}$, and the right-boundary line passes through the right \ac{FR} endpoint $\boldsymbol{q}_{m,\mathrm{R}}$ and the transmitter $\boldsymbol{p}_{\mathrm{T}}$. Applying \eqref{eq:general_boundary} and combining the minimum distance requirement $d_{\min}$, we obtain
\begin{align*}
x_{m,\mathrm{L}} &= \min\!\left\{ \frac{\pm\frac{L_1}{2}\| \boldsymbol{n}_{\mathrm{L}}\| - c_{0,\mathrm{L}}}{n_{\mathrm{L},x}}, x_m-d_{\min} \right\},\\
x_{m,\mathrm{R}} &= \max\!\left\{ \frac{\pm\frac{L_1}{2}\| \boldsymbol{n}_{\mathrm{R}}\| - c_{0,\mathrm{R}}}{n_{\mathrm{R},x}}, x_m+d_{\min} \right\},
\end{align*}
where $\boldsymbol{n}_{\mathrm{L}} = \begin{bmatrix} y_{m,\mathrm{L}} - y_{\mathrm{T}} \\ x_{\mathrm{T}} - x_{m,\mathrm{L}} \end{bmatrix}$, $\boldsymbol{n}_{\mathrm{R}} = \begin{bmatrix} y_{m,\mathrm{R}} - y_{\mathrm{T}} \\ x_{\mathrm{T}} - x_{m,\mathrm{R}} \end{bmatrix}$, $c_{0,\mathrm{L}} =  x_{m,\mathrm{L}} y_{\mathrm{T}} - x_{\mathrm{T}} y_{m,\mathrm{L}}$, $c_{0,\mathrm{R}} =  x_{m,\mathrm{R}} y_{\mathrm{T}} - x_{\mathrm{T}} y_{m,\mathrm{R}}$, $n_{\mathrm{L},x} = y_{m,\mathrm{L}} - y_{\mathrm{T}}$, and $n_{\mathrm{R},x} = y_{m,\mathrm{R}} - y_{\mathrm{T}}$.

\bibliographystyle{IEEEtran}
\bibliography{Secure_FR_2026TIFS,IEEEtranBSTCTL}
\end{document}